 \definecolor{green}{rgb}{0.00,0.80,0.00} 
 \definecolor{purple}{rgb}{0.60,0.22,0.65} 
 \definecolor{yellow}{rgb}{1.00,0.90,0.00} 
\newcommand{\bz}{\mathbb{Z}}
\newcommand{\zt}{\mathbb{Z}_2}
\newcommand{\br}{\mathbb{R}}
\newcommand{\Ione}{\mathbb{I}}
\newcommand{\sym}{S}
\newcommand{\sign}{\Sigma}
\newcommand{\GR}{\mathcal{GR}}
\newcommand{\SL}{\mbox{SL}}
\newcommand{\GL}{\mbox{GL}}
\newcommand{\CM}{C\!M}
\newcommand{\TM}{T\!M}
\newcommand{\VM}{V\!M}
\newcommand{\sk}{\vspace{10pt}}
\newtheorem{theorem}{Theorem}
\newtheorem{lemma}[theorem]{Lemma}
\let\MC=\multicolumn
\def\baselinestretch{1.1}   \parskip\medskipamount     \lineskip=0pt
 \font\rOpe=cmsy10                        
 \def\ktl{{\hbox{\rOpe\char'170}}}        
 \def\kbl{{\hbox{\rOpe\char'170}}}        
 \def\kcr{{\reflectbox{\rOpe\char'170}}}        
 \def\ktr{{\reflectbox{\rOpe\char'170}}}        
 \def\kbr{{\reflectbox{\rOpe\char'170}}}        
 \def\Border{\vbox{\hsize0pt
        \setlength{\unitlength}{1mm}
        \newcount\xco
        \newcount\yco
        \xco=-21
        \yco=12
        \begin{picture}(0,0)(-7.5,0)
        \put(\xco,\yco){$\ktl$}
        \advance\yco by-1
        {\loop
        \put(\xco,\yco){$\kcr$}
        \advance\yco by-2
        \ifnum\yco>-240
        \repeat
        \put(\xco,\yco){$\kbl$}}
        \xco=170
        \yco=12
        \put(\xco,\yco){$\ktr$}
        \advance\yco by-1
        {\loop
        \put(\xco,\yco){$\kcr$}
        \advance\yco by-2
        \ifnum\yco>-240
        \repeat
        \put(\xco,\yco){$\kbr$}}
        \put(-19.5,13){\scalebox{.71}{%
         University of Maryland Center for String and Particle  Theory \&\ Physics Department%
        |University of Maryland Center for String and Particle  Theory \&\ Physics Department}}
        \put(-19.5,-241.5){\scalebox{.63}{
         Howard University Department of Physics and Astronomy%
         |University of Central Florida Dept. of Physics%
        |Pepperdine University Natural Science Division%
        |Bard College Mathematics Program}}
        \end{picture}
        \par\vskip-8mm}}
\definecolor{UMred}{rgb}{.9,.05,.2}
\definecolor{HUblue}{rgb}{.0,.3,.7}
 \def\UMbanner{\vbox{\hsize0pt
        \setlength{\unitlength}{.4mm}
        \thicklines\color{UMred}
        \begin{picture}(0,0)(-30,-10)
        \put(165,16){\line(1,0){4}}
        \put(170,16){\line(1,0){4}}
        \put(180,16){\line(1,0){4}}
        \put(175,0){\line(1,0){4}}
        \put(180,0){\line(1,0){4}}
        \put(185,0){\line(1,0){4}}
        \put(169,0){\line(0,1){16}}
        \put(170,0){\line(0,1){16}}
        \put(179,0){\line(0,1){16}}
        \put(180,0){\line(0,1){16}}
        \put(184,0){\line(0,1){16}}
        \put(185,0){\line(0,1){16}}
        \put(169,16){\oval(8,32)[bl]}
        \put(170,16){\oval(8,32)[br]}
        \put(179,0){\oval(8,32)[tl]}
        \put(185,0){\oval(8,32)[tr]}
        \color{HUblue}
        \put(167.75,-2){\line(1,0){4}}
        \put(172.75,-2){\line(1,0){4}}
        \put(177.75,-2){\line(1,0){4}}
        \put(182.75,-2){\line(1,0){4}}
        \put(167.75,-2){\line(0,-1){16}}
        \put(171.75,-2){\line(0,-1){16}}
        \put(172.75,-2){\line(0,-1){16}}
        \put(176.75,-2){\line(0,-1){16}}
        \put(181.75,-2){\line(0,-1){16}}
        \put(182.75,-2){\line(0,-1){16}}
        \put(181.75,-2){\oval(8,32)[bl]}
        \put(182.75,-2){\oval(8,32)[br]}
        \put(167.75,-18){\line(1,0){4}}
        \put(172.75,-18){\line(1,0){4}}
        \end{picture}
        \par\vskip-6.5mm
        \thicklines}}
\definecolor{Red}    {rgb}{0.90,0.00,0.12} 
\definecolor{Blue}   {rgb}{0.00,0.00,1.00} 
\definecolor{Green}  {rgb}{0.10,0.70,0.10} 
\definecolor{Turque} {rgb}{0.00,0.65,0.85} 
\definecolor{Orange} {rgb}{1.00,0.50,0.15} 
\definecolor{Magenta}{rgb}{1.00,0.00,1.00} 
\definecolor{Gold}   {rgb}{1.00,0.75,0.25} 
\definecolor{Seaweed}{rgb}{0.01,0.24,0.09} 
\definecolor{Purple} {rgb}{0.50,0.25,0.55} 
\definecolor{Brown}  {rgb}{0.43,0.26,0.32} 
\definecolor{grey1}  {rgb}{0.20,0.20,0.20} 
\definecolor{grey2}  {rgb}{0.40,0.40,0.40} 
\definecolor{grey3}  {rgb}{0.60,0.60,0.60} 
\definecolor{grey4}  {rgb}{0.80,0.80,0.80} 
\definecolor{grey5}  {rgb}{0.90,0.90,0.90} 
\def\C#1#2{{\ifcase#1\or
             \color{Red}\or \color{Green}\or \color{Blue}\or\
              \color{Turque}\or \color{Orange}\or \color{Magenta}\or 
               \color{Gold}\or \color{Seaweed}\or \color{Purple}\or
                \color{Brown}\or\color{grey1}\or\color{grey2}\or
                 \color{grey3}\else\color{grey4}\fi#2}}
\definecolor{Slate} {rgb}{0.00,0.45,0.55}
\newdimen\parshift\parshift=\parindent
 \long\def\@footnotetext#1{\insert\footins{\reset@font\footnotesize
           \interlinepenalty\interfootnotelinepenalty\splittopskip%
            \footnotesep\splitmaxdepth\dp\strutbox\floatingpenalty\@MM%
             \hsize\columnwidth\addtolength{\hsize}{-2\parindent}
              \@parboxrestore\protected@edef\@currentlabel%
              {\csname p@footnote\endcsname\@thefnmark}%
                \color@begingroup%
                 \@makefntext{\rule\z@\footnotesep\ignorespaces#1%
                  \@finalstrut\strutbox}%
                \color@endgroup}}
 \long\def\@makefntext#1{\hglue\parshift%
           \vbox{\noindent\baselineskip=11pt plus.5pt minus.5pt\hb@xt@0em{\hss\@makefnmark\kern1pt}#1}}
\newskip\humongous \humongous=0pt plus 1000pt minus 1000pt
\newif\ifdtup
\def\section{\@startsection{section}{1}{\z@}
        {3ex plus-1ex minus-.2ex}{1pt plus1pt}{\large\sf\bfseries\boldmath}}
\def\subsection{\@startsection{subsection}{2}{\z@}
         {1.5ex plus-1ex minus-.2ex}{0.01pt plus1pt}{\sf\slshape}}
\def\subsubsection{\@startsection{subsubsection}{3}{\z@}
          {1.5ex plus-1ex minus-.2ex}{0.01pt plus0.2pt}{\sf\boldmath}}
\def\paragraph{\@startsection{paragraph}{4}{\z@}
           {.75ex \@plus.5ex \@minus.2ex}{-2mm}{\sf\bfseries\boldmath}}
\begin{document}

\thispagestyle{empty}
\vbox{\Border\UMbanner}
 \noindent{\small
 \today\hfill{PP-016-005 %
 }}
\vspace*{5mm}
\begin{center}
{\LARGE\sf\bfseries
${N=4}$ and ${N=8}$ SUSY Quantum Mechanics\\ [3mm]
and Klein's Vierergruppe}
\\[12mm]
{\large\sf\bfseries 
S.\ James Gates, Jr.$^{a}$\footnote{gatess@wam.umd.edu},~
T.\, H\"{u}bsch$^{b, \, c}$\footnote{thubsch@howard.edu},~
K.\, Iga$^{d}$\footnote{kiga@pepperdine.edu},~
and\,
S. Mendez--Diez$^{e}$\footnote{smendezdiez@bard.edu}
}\\*[8mm]
\emph{
\centering
$^a$Center for String and Particle Theory, Dept.\ of Physics, \\[-2pt]
University of Maryland, College Park, MD 20472 
\\[10pt]
$^{b}$Dept.\ of Physics \&\ Astronomy, Howard University, Washington, DC 20059
\\[10pt]
$^c$Dept.\ of Physics, University of Central Florida, Orlando, FL 32816
\\[10pt] 
$^d$Natural Science Division, Pepperdine University, 24255 Pacific Coast Hwy,
Malibu, CA 90263
\\[10pt] 
$^e$Mathematics Program, Bard College, Annandale-on-Hudson, NY 12504     
}
 \\*[50mm]
{\sf\bfseries ABSTRACT}\\[4mm]
\parbox{142mm}{\parindent=2pc\indent\baselineskip=14pt plus1pt
Sets of signed permutation matrices satisfying the $\GR(4,4)$ algebra are shown 
to be, up to sign, left cosets of Klein's famous Vierergruppe.  In this way we verify 
the count done by computer in 2012, and set it in a more significant mathematical 
context.  A similar analysis works for $\GR(1,1)$, $\GR(2,2)$ and $\GR(8,8)$.
 }
 \end{center}
\vfill
\noindent PACS: 11.30.Pb, 12.60.Jv\\
Keywords: quantum mechanics, supersymmetry, off-shell supermultiplets
\vfill

\clearpage

\setlength{\unitlength}{1pt}
\def\baselinestretch{1}   \parskip\medskipamount     \lineskip=0pt
\setlength{\topmargin}{0in}      
\setlength{\textheight}{9in}
\setlength{\oddsidemargin}{0in}
\setlength{\evensidemargin}{\oddsidemargin}
\setlength{\hsize}{7in}
\setlength{\textwidth}{\hsize}












\section{Introduction}
The $\GR(d,N)$ algebras have been used in the study of supersymmetry in one dimension ($0$ space, $1$ time).\cite{rGR1,rGR2,rGR3,remains,enuf}  The $\GR(d,N)$ algebra involves $N>0$ signed permutation matrices $L_1, \ldots, L_N$, each of size $d\times d$, such that
\begin{equation}
\begin{aligned}
L_IL_J{}^T+L_JL_I{}^T&=2\delta_{IJ}\,\Ione\\
L_I{}^TL_J+L_J{}^TL_I&=2\delta_{IJ}\,\Ione,\label{eqn:garden}
\end{aligned}
\end{equation}
where ${}^T$ denotes matrix transpose, $\delta_{IJ}$ is the Kronecker delta symbol, and $\Ione$ is the identity matrix.  For each $N$, there is a minimal $d$ where this is possible, according to the theory of Clifford algebras.\cite{rGR1,rGR2} The formula for this minimal $d$ is
\begin{equation}
d_{\min}=\begin{cases}
2^{\frac{N}{2}-1},&N\equiv  0 \bmod{8}\\
2^{\frac{N-1}{2}},&N\equiv 1,\,7 \bmod{8}\\
2^{\frac{N}{2}},&N\equiv 2,\,4,\,6 \bmod{8}\\
2^{\frac{N+1}{2}},&N\equiv 3,\,5\bmod{8}.
\end{cases}
\label{eqn:dmin}
\end{equation}
The cases where $N$ is a multiple of $4$ is of greater interest because these can arise from dimensional reduction from four dimensions, among other reasons. 

In 2012, an exhaustive computer search using {\sl\/Mathematica\/\textsuperscript{TM}} for the case $N=4$, $d=d_{\min}=4$ found 1536 such sets, which furthermore fit into $6$ types, according to the permutation matrices obtained by entrywise absolute values of the $L_I$ matrices.\cite{pizza4}

Subsequently, in 2013, Stephen Randall, an undergraduate student at the University of Maryland, did a similar {\sl\/Mathematica\/\textsuperscript{TM}} search for $N=8$ and $d=d_{\min}=8$, and though the long computation time prevented finding the total number of sets of $L_I$ matrices, he was able to find that there were 151{,}200 sets of permutation matrices.\cite{pizza8}

In this paper, we show that $N=4$, $d=4$ case can be understood in terms of Klein's famous {\em Vierergruppe}.  This not only generates the numbers $6$ and 1536 without the aid of a computer, but it also gives meaning to the six types in terms of the theory of cosets.  This approach, generalized slightly, applies as well to the case $N=8$, $d=8$.  This allows for a non-computer counting of the 151{,}200 sets of permutation matrices, and also provides an answer to the number of sets of $L_I$ matrices.

\subsection{Signed permutation matrices}
A signed permutation matrix is a square matrix where each column and each row has exactly one non-zero entry, and this entry is either $1$ or $-1$.  A permutation matrix is a signed permutation matrix where all non-zero entries are $1$.  As a linear transformation, a permutation matrix acts as a permutation on the standard basis elements in $\br^d$.  These, in turn, can be viewed as permutations on the numbers $\{1,\ldots,d\}$.

Every signed permutation matrix $L$ can be written as a matrix product $L=\sign |L|$ where $\sign$ is a diagonal matrix with only $\pm 1$ entries, and $|L|$ is the matrix obtained from taking the absolute value of each of the entries of $L$.  The matrix $|L|$ is a permutation matrix.  The $(i,i)$ entry of the diagonal matrix $\sign$ is the sign of the non-zero entry in the $i$th row of $L$. See Figure~\ref{fig:signedperm}.
\begin{figure}[ht]
\[
\left[\begin{array}{cccc}
0&1&0&0\\
0&0&0&-1\\
0&0&-1&0\\
1&0&0&0
\end{array}\right]
=
\left[\begin{array}{cccc}
1&0&0&0\\
0&-1&0&0\\
0&0&-1&0\\
0&0&0&1
\end{array}\right]
\left[\begin{array}{cccc}
0&1&0&0\\
0&0&0&1\\
0&0&1&0\\
1&0&0&0
\end{array}\right]
\]
\caption{Factoring a signed permutation matrix into a diagonal and a permutation matrix}
\label{fig:signedperm}
\end{figure}

Note that $\sign$ and $|L|$ are orthogonal matrices, and since orthogonal matrices form a group, $L=\sign |L|$ is orthogonal as well.  Also, if $L$ and $L'$ are signed permutation matrices, then $|LL'|=|L||L'|$.  This fact is related to the fact that the set of $\sign$ matrices is a normal subgroup of the set of signed permutation matrices.

\subsection{Permutation notation}
The following is a review of some basic facts about permutations, which also serves to establish notation.

A permutation is a bijection from the set $\{1,\ldots,d\}$ to itself.  One way to write such a function is to write the numbers $1,\ldots, d$ in order from left to right on one line, and then the values of the function are written underneath, so that:
\[ \sigma=\left\langle \begin{array}{ccccc}
1&2&3&4&5\\
3&4&5&2&1
\end{array}\right\rangle \]
is the permutation $\sigma$ so that $\sigma(1)=3$, $\sigma(2)=4$, $\sigma(3)=5$, $\sigma(4)=2$, and $\sigma(5)=1$.  This notation is called {\em two line notation}.   Since the first line is always the same, we can omit it:
\[ \sigma=\langle 3\, 4\, 5\, 2\, 1\rangle \]
This notation is called {\em one line notation}.  Note the use of angle brackets here instead of parentheses, to avoid confusion with disjoint cycle notation, described below.

We compose permutations as functions, right-to-left, so that if $\sigma$ sends $1$ to $2$, and $\tau$ sends $2$ to $3$, then $\tau\sigma$ sends $1$ to $3$.  The set of permutations on $d$ elements then forms a group with $d!$ elements, called the symmetric group, and denoted $\sym_d$.

If $a_1, a_2, \ldots, a_n$ are a finite ordered sequence of different numbers from the set $\{1,\ldots,d\}$, then we can define a permutation called a cycle, denoted
\[ (a_1\, a_2\, \ldots\, a_n) \]
that sends $a_1$ to $a_2$, $a_2$ to $a_3$, and so on, up to $a_{n-1}$ to $a_n$; and also sends $a_n$ to $a_1$.  All other elements of $\{1,\ldots,d\}$, if they exist, are sent to themselves under this permutation.  Sometimes commas are included between the $a_i$ when omitting them might cause confusion.
 The identity is written $(\,)$.

Every permutation can be written as a product of disjoint cycles, for instance, with the example $\sigma$ above, we can write:
\[ \sigma=(1\,3\,5)(2\,4) \]
is a permutation which sends $1$ to $3$, sends $3$ to $5$, sends $5$ to $1$, sends $2$ to $4$, and sends $4$ to $2$.  We will typically write permutations in this {\em disjoint cycle notation} in this paper, though note that Ref.~\citen{pizza4} used the one line notation.

\subsection{$\GR(d,N)$ algebras and one dimensional supersymmetry}
The $N$-ex\-ten\-ded supersymmetry algebra in one dimension involves some surprisingly rich and beautiful mathematics.  The algebra is generated by one time translation operator $H=i\frac{d}{dt}$, and $N$ odd real operators $Q_1,\ldots, Q_N$, each commuting with $H$, with
\begin{equation}
\{Q_I,Q_J\}=2i\,\frac{d}{dt}.
\label{eqn:susy}
\end{equation}

The {\em isoscalar supermultiplet}, originally called the {\em Spinning Particle} in Refs.~\citen{rGR1,rGR2,rGR3} and the {\em scalar multiplet} in Ref.~\citen{rA}, is defined using $d$ bosonic fields $\Phi_i$ and $d$ fermionic fields $\Psi_i$, with transformation rules
\begin{equation}
\begin{aligned}
Q_I \Phi_i&=(L_I)_i{}^j\Psi_j\\
Q_I \Psi_i&=i(R_I)_i{}^j\frac{d}{dt}\Phi_j.
\end{aligned}
\end{equation}
The $L_1,\ldots,L_N$ and $R_1,\ldots,R_N$ are $d\times d$ signed permutation matrices.  The supersymmetry algebra~(\ref{eqn:susy}) then becomes
\begin{equation}
\begin{aligned}
L_IR_J+L_JR_I&=2\delta_{IJ}\Ione\\
R_IL_J+R_JL_I&=2\delta_{IJ}\Ione .\label{eqn:garden0}
\end{aligned}
\end{equation}
By setting $I=J$ it is easy to see $R_I=L_I{}^{-1}$, and since signed permutations are orthogonal, $R_I=L_I{}^T$.  Then~(\ref{eqn:garden0}) takes on the form~(\ref{eqn:garden}) above.

Thus, for every choice of $L_1,\ldots,L_N$ matrices satisfying these conditions, there is an off-shell one-dimensional isoscalar supermultiplet.  All bosons are of one engineering degree, and all fermions are of an engineering degree $1/2$ higher.

\subsection{Adinkras}
\label{sec:adinkra}
Graphs known as Adinkras were proposed by Faux and Gates in Ref.~\citen{rA} and precisely codified in Ref.~\citen{r6-1} as a fruitful way to investigate off-shell supermultiplets in one dimension.  In particular, they can be used to study these isoscalar multiplets.

An Adinkra is a graph, where vertices are drawn either with white or black dots, and edges are colored from among $N$ colors, and edges can either be solid or dashed.

In this context, given a set of $\{L_I\}$ matrices, the corresponding Adinkra has $d$ white vertices labeled $\Phi_1,\ldots,\Phi_d$ and $d$ black vertices labeled $\Psi_1,\ldots,\Psi_d$.  The white vertices are arranged from left to right in a horizontal line in numerical order, and the black vertices are likewise placed above them.  If $(L_I)_i{}^j$ is non-zero, an edge of color $I$ is drawn from vertex $\Phi_i$ to vertex $\Psi_j$, and is solid if $(L_I)_i{}^j$ is $1$, and dashed if it is -1.  See Figure~\ref{fig:adinkra4}.

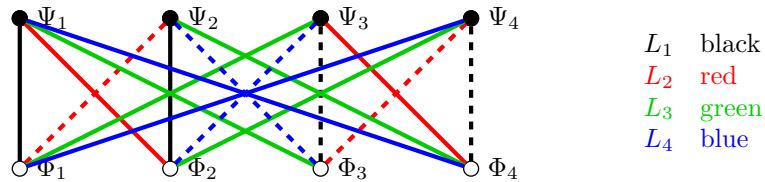
\begin{figure}[ht]
\begin{center}
\begin{tikzpicture}
\GraphInit[vstyle=Welsh]
\SetVertexNormal[MinSize=5pt]
\SetUpEdge[labelstyle={draw},style={ultra thick}]
\tikzset{Dash/.style = {ultra thick, dashed}}
\Vertex[x=0,y=0,Math,L={\Phi_1}]{B1}
\Vertex[x=2,y=0,Math,L={\Phi_2}]{B2}
\Vertex[x=4,y=0,Math,L={\Phi_3}]{B3}
\Vertex[x=6,y=0,Math,L={\Phi_4}]{B4}
\Vertex[x=0,y=2,Math,L={\Psi_1}]{F1}
\Vertex[x=2,y=2,Math,L={\Psi_2}]{F2}
\Vertex[x=4,y=2,Math,L={\Psi_3}]{F3}
\Vertex[x=6,y=2,Math,L={\Psi_4}]{F4}
\AddVertexColor{black}{F1,F2,F3,F4}
\Edge(B1)(F1)
\Edge(B2)(F2)
\Edge[style=Dash](B3)(F3)
\Edge[style=Dash](B4)(F4)
\Edge[color=red,style=Dash](B1)(F2)
\Edge[color=red](B2)(F1)
\Edge[color=red,style=Dash](B3)(F4)
\Edge[color=red](B4)(F3)
\Edge[color=green](B1)(F3)
\Edge[color=green](B2)(F4)
\Edge[color=green](B3)(F1)
\Edge[color=green](B4)(F2)
\Edge[color=blue](B1)(F4)
\Edge[color=blue,style=Dash](B2)(F3)
\Edge[color=blue,style=Dash](B3)(F2)
\Edge[color=blue](B4)(F1)
\node [right] at (8,1) {
\begin{tabular}{rl}
$L_1$&black\\
{\color{red} $L_2$}&{\color{red} red}\\
{\color{green} $L_3$}&{\color{green} green}\\
{\color{blue} $L_4$}&{\color{blue} blue}
\end{tabular}
};
\end{tikzpicture}
\end{center}
\caption{An Adinkra with $N=4$ and $d=4$, illustrating the $L_I$ matrices given in (\ref{eqn:livm3}) in Section~\ref{sec:vm3}.  The lower four vertices are bosons, and the top four vertices are fermions.  The colors correspond to each of the $L_I$ according to the table on the right.}
\label{fig:adinkra4}
\end{figure}

Adinkras in general may have bosons and fermions at many different levels, recording the engineering dimension of each vertex, but in our case, these are kept at two levels.  Such two-level Adinkras are called {\em valise} Adinkras.

Adinkras have been classified in a series of works.\cite{r6-1,r6-2,r6-codes,at,r6-8,cc,rPT,rT01,rKRT,rKT07}
The first step in this classification is the Hamming cube, which is the 1-skeleton of the $N$-dimensional cube $[0,1]^N$.  As a graph, the vertices are $N$-tuples of $0$'s and $1$'s, and edges connect vertices that differ in precisely one coordinate.  We color each such edge by color $I$ if the $I$th coordinate is the one that differs.  A vertex is a boson if the number of $1$'s is even, and a fermion otherwise.  

According to the classification of Adinkras, each Adinkra is a disjoint union of quotients of $N$-dimensional Hamming cubes by doubly even codes of length $N$.\cite{at}  A doubly even code of length $N$ is a subgroup of $\zt{}^N$ so that the number of $1$s in each element is a multiple of $4$.

One caution, however: in this classification, the graphs are abstract, and are considered equivalent up to isomorphism.  But for classifying sets of $L_I$ matrices, we must number the bosons and fermions $1$ through $d$.  In particular, if we renumber the vertices, we may end up with the same Adinkra, but different sets of $L_I$ matrices.  A discussion on the meaning of this difference is relegated to Section~\ref{sec:metaphysics}.

\section{$\VM_3$ and the Vierergruppe}
\label{sec:vm3}
As described in the Introduction, an exhaustive computer search found six sets of permutation matrices that come from $L_I$ matrices with $N=4$, $d=4$.\cite{pizza4}  The sixth such set, called $\VM_3$, is our starting point.  The set of $L_I$ matrices is as follows (this corresponds to the Adinkra in Figure~\ref{fig:adinkra4}).
\begin{equation}
\begin{aligned}
L^{\VM_3}_1&=\left[\begin{matrix}
1&0&0&0\\
0&1&0&0\\
0&0&-1&0\\
0&0&0&-1
\end{matrix}\right],
&\qquad
L^{\VM_3}_2&=\left[\begin{matrix}
0&-1&0&0\\
1&0&0&0\\
0&0&0&-1\\
0&0&1&0
\end{matrix}\right],\\
L^{\VM_3}_3&=\left[\begin{matrix}
0&0&1&0\\
0&0&0&1\\
1&0&0&0\\
0&1&0&0
\end{matrix}\right],
&\qquad
L^{\VM_3}_4&=\left[\begin{matrix}
0&0&0&1\\
0&0&-1&0\\
0&-1&0&0\\
1&0&0&0
\end{matrix}\right].
\end{aligned}
\label{eqn:livm3}
\end{equation}

If we drop the signs, the result is a set of permutation matrices:
\begin{equation*}
\begin{aligned}
v_1=|L^{\VM_3}_1|&=\left[\begin{matrix}
1&0&0&0\\
0&1&0&0\\
0&0&1&0\\
0&0&0&1
\end{matrix}\right],
&\qquad
v_2=|L^{\VM_3}_2|&=\left[\begin{matrix}
0&1&0&0\\
1&0&0&0\\
0&0&0&1\\
0&0&1&0
\end{matrix}\right],\\
v_3=|L^{\VM_3}_3|&=\left[\begin{matrix}
0&0&1&0\\
0&0&0&1\\
1&0&0&0\\
0&1&0&0
\end{matrix}\right],
&\qquad
v_4=|L^{\VM_3}_4|&=\left[\begin{matrix}
0&0&0&1\\
0&0&1&0\\
0&1&0&0\\
1&0&0&0
\end{matrix}\right].
\end{aligned}
\end{equation*}
In disjoint cycle notation these permutations become:
\begin{equation*}
\begin{aligned}
v_1&=(\,),&\qquad
v_2&=(1\,2)(3\,4),\\
v_3&=(1\,3)(2\,4),&\qquad
v_4&=(1\,4)(2\,3).
\end{aligned}
\end{equation*}
One striking fact about this case is that $v_4=v_2v_3=v_3v_2$, $v_2=v_3v_4=v_4v_3$, and $v_3=v_2v_4=v_4v_2$, so that $V=\{v_1, v_2, v_3, v_4\}$ forms a subgroup of the symmetric group $\sym_4$, with the following multiplication table:
\begin{equation*}
\begin{array}{c|cccc}
&v_1&v_2&v_3&v_4\\\hline
v_1&v_1&v_2&v_3&v_4\\
v_2&v_2&v_1&v_4&v_3\\
v_3&v_3&v_4&v_1&v_2\\
v_4&v_4&v_3&v_2&v_1
\end{array}
\end{equation*}
This $V$ is abelian, and is isomorphic to $\bz_2\times\bz_2$.  It is a normal subgroup of $\sym_4$.  It is in fact the famous Klein Vierergruppe popularized by F. Klein in his 1884 text, {\em Lectures on the Icosahedron and Equations of the Fifth Degree}.\cite{klein} Among other possible descriptions, the Vierergruppe can be thought of as acting on a square by horizontal reflection, vertical reflection, and rotation by $180$ degrees around the center (which is the composition of both reflections), as shown in Figure~\ref{fig:square}.  An alternative point of view is that the Vierergruppe is the group of translations modulo $2$: add an ordered pair of integers $(a,b)$ and reduce both coordinates modulo $2$.  This can be put in a physical context if we imagine that the square is a part of an infinite two dimensional square lattice, which is periodic in both the horizontal and vertical directions by translations, and the Vierergruppe is what happens when we translate by half a period.
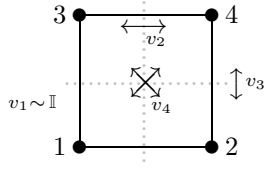
\begin{figure}[ht]
\begin{center}
  \begin{picture}(50,60)
   \color{lightgray}
   \multiput(23,-5.75)(0,3){21}{.}
   \multiput(-6,23.5)(3,0){21}{.}
   \color{black}
   \put(-27,15){$\scriptstyle v_1\sim\,\Ione$}
   \put(56,21.5){$\big\updownarrow\rlap{$\vcenter{\hbox{$\scriptstyle v_3$}}$}$}
   \put(25,39){$\scriptstyle v_2$}
   \put(15,43){$\longleftrightarrow$}
   \put(21,23){$\nearrow$}
   \put(19,23){$\nwarrow$}
   \put(21,21){$\searrow$}
   \put(19,21){$\swarrow$}
   \put(27,14){$\scriptstyle v_4$}
   \put(0,0){\line(1,0){50}}
   \put(0,0){\line(0,1){50}}
   \put(50,50){\line(-1,0){50}}
   \put(50,50){\line(0,-1){50}}
   \put(0,0){\circle*{5}}   \put(-10,-3){1}
   \put(50,0){\circle*{5}}  \put(55,-3){2}
   \put(0,50){\circle*{5}}  \put(-10,47){3}
   \put(50,50){\circle*{5}} \put(55,47){4}
  \end{picture}
\end{center}
\caption{A square, regarded as a two-dimensional affine space over the field $\zt$, i.e., the fundamental domain in a square lattice.  The Vierergruppe $V$ may then be regarded as the group of translations modulo $2$, or equivalently of the indicated reflections.}
\label{fig:square}
\end{figure}

\subsection{$\GR(4,4)$ and the Vierergruppe}
\label{sec:n4}
We now consider more general isoscalar supermultiplets with $N=4$, $d=4$, from the Adinkra perspective.  The Hamming cube $\{0,1\}^4$ has $2^4=16$ vertices, so $d=8$ bosons and $8$ fermions.  In order to get the minimal $d=4$, we need to quotient by a doubly even code.  There is only one non-trivial doubly even code:\cite{rCHVP}
\[ d_4=\{(0,0,0,0),\,(1,1,1,1)\}. \]
It is precisely by quotienting the Hamming cube by this code that we have an Adinkra with $d=4$ bosons and $d=4$ fermions.

Therefore all the valise Adinkras of $\GR(4,4)$ look the same, if we ignore the dashings.  The only issue, then, is identifying which boson is $\Phi_1$, which one is $\Phi_2$, and so on, and which fermion is $\Psi_1$, which one is $\Psi_2$, and so on.  The choice for $\VM_3$ is one such choice.  The choice for the others involves a different ordering of the bosons and fermions, which, relative to the choice for $\VM_3$, is a permutation $\sigma$ of the bosons and a permutation $\tau$ of the fermions.

Thus, if $\{L_1,L_2,L_3,L_4\}$ satisfy the $\GR(4,4)$ equations in~(\ref{eqn:garden}), then there are permutations $\sigma$ and $\tau$ so that $\{|L_1|, \ldots, |L_4|\}=\{\sigma v_1\tau , \ldots, \sigma v_4\tau\}$.  In the language of cosets, there are permutations $\sigma$ and $\tau$ so that $\{|L_1|,\ldots,|L_4|\}=\sigma V\tau$.

Since $V$ is a normal subgroup of $\sym_4$, we have $\sigma V\tau = \sigma\tau V=\pi V$, where we define $\pi=\sigma\tau$.  We have thus proved:
\begin{theorem}
For $\GR(4,4)$, sets of permutations $\{|L_1|,\ldots,|L_4|\}$ are left cosets of the Vierergruppe $V$ in $\sym_4$.
\label{thm:coset4}
\end{theorem}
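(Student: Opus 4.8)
The plan is to recognize that this theorem is the payoff of the discussion immediately preceding it: the substantive work is the structural reduction ``every admissible set $\{|L_1|,\ldots,|L_4|\}$ is a two-sided translate $\sigma V\tau$ of the Vierergruppe'', after which the conclusion follows from a single observation about normality. So first I would establish the reduction carefully, and then invoke the fact that $V$ is normal in $\sym_4$.

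For the reduction: since the statement concerns only the permutation matrices $|L_I|$, the signs (dashings) are irrelevant and one works with the underlying edge-$4$-coloured bipartite graph of the valise Adinkra. By the classification recalled in Section~\ref{sec:adinkra}, a valise $\GR(4,4)$ Adinkra is a disjoint union of quotients of the Hamming $4$-cube by doubly even codes of length $4$; minimality $d=4$ forces a single component and a nontrivial code, and the only nontrivial doubly even code of length $4$ is $d_4=\{(0,0,0,0),(1,1,1,1)\}$. Hence the underlying graph is unique up to isomorphism, with the $\VM_3$ configuration of~(\ref{eqn:livm3}) (equivalently Figure~\ref{fig:adinkra4}) as a representative. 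Thus for any admissible $\{L_1,\ldots,L_4\}$ there is a graph isomorphism onto the $\VM_3$ graph; such an isomorphism is precisely a relabelling of the four bosons by some $\sigma\in\sym_4$, of the four fermions by some $\tau\in\sym_4$, together possibly with a permutation of the four colours. Under this relabelling each colour-matrix $v_i$ of $\VM_3$ is carried to $\sigma v_i\tau$, so $\{|L_1|,\ldots,|L_4|\}=\{\sigma v_1\tau,\ldots,\sigma v_4\tau\}=\sigma V\tau$; the possible colour permutation is harmless because $\{v_1,v_2,v_3,v_4\}$ is all of $V$ as a set.

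The conclusion is then immediate: $V$ is normal in $\sym_4$ (indeed $V\cong\zt\times\zt$, the Vierergruppe, as verified above), hence $\sigma V\tau=\sigma\tau V=\pi V$ with $\pi:=\sigma\tau$, a left coset. For completeness --- and to recover that there are exactly $4!/|V|=6$ such sets --- I would also note the converse: every left coset $\pi V$ is realized, since relabelling the bosons of $\VM_3$ by $\pi$ (with $\tau=(\,)$) again yields matrices satisfying~(\ref{eqn:garden}), because those relations are preserved by any independent relabelling of bosons and fermions (the relabelling matrices being orthogonal, $QQ^T=\Ione$). The one point requiring care is the reduction step: translating the \emph{abstract} Adinkra classification (graphs up to isomorphism) into the \emph{concrete} assertion that the sole freedom in the $|L_I|$ is left and right multiplication by permutations --- i.e., checking that a bipartition- and colour-respecting graph isomorphism really is just such a relabelling, and that permuting colours cannot enlarge the resulting set beyond $\sigma V\tau$.
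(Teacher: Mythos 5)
Your proposal is correct and follows essentially the same route as the paper: uniqueness of the underlying valise Adinkra (via the unique nontrivial doubly even code $d_4$) gives $\{|L_1|,\ldots,|L_4|\}=\sigma V\tau$ for some boson/fermion relabellings $\sigma,\tau$, and normality of $V$ in $\sym_4$ turns this into the left coset $\pi V$ with $\pi=\sigma\tau$. Your added remarks on colour permutations and on the converse (that every coset is realized) are consistent refinements of the same argument rather than a different approach.
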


The set of left cosets of $V$ is $\sym_4/V\cong \sym_3$.  Therefore there are six choices for $\{|L_1|,\ldots, |L_4|\}$, one for each element of $\sym_3$:
\begin{equation}
\begin{array}{l@{\>=\>}r@{\>=\>}l}
  \VM_3 &  V          &\{(\,), (1\,2)(3\,4), (1\,3)(2\,4), (1\,4)(2\,3)\},\\
  \VM_2 &  (1\,2)V    &\{(1\,2), (3\,4), (1\,3\,2\,4), (1\,4\,2\,3)\},\\
  \VM_1 &  (1\,3)V    &\{(1\,3), (1\,2\,3\,4), (2\,4), (1\,4\,3\,2)\},\\
  \VM   &  (2\,3)V    &\{(2\,3), (1\,3\,4\,2), (1\,2\,4\,3), (1\,4)\},\\
  \CM   &\>(1\,2\,3)V &\{(1\,2\,3), (1\,3\,4), (2\,4\,3), (1\,4\,2)\}\\
  \TM   &\>(1\,3\,2)V &\{(1\,3\,2), (2\,3\,4), (1\,2\,4), (1\,4\,3)\}.
\end{array} 
\end{equation}
Here we have added the notation for these six types from Ref.~\citen{pizza4}: $\CM$, $\VM$, and $\TM$ are dimensional reductions of the $4d$, $N=1$ chiral, vector, and tensor supermultiplets, respectively, and $\VM_1$, $\VM_2$, and $\VM_3$ are the other types.

Thus we have six choices for the $\{|L_1|, \ldots, |L_4|\}$.  Each of these six choices correspond to many sets of signed permutations $\{L_1,\ldots,L_4\}$, as we will explain in the following section.

\subsection{Choosing signs}
To find how many sets $\{L_1,\ldots,L_N\}$ correspond to a set of permutations $\{|L_1|,\ldots,|L_N|\}$, we have the following Lemma:

\begin{lemma}
If the $\{L_1,\ldots,L_N\}$ satisfy the Garden Algebra~(\ref{eqn:garden}), and describe a connected Adinkra, with code $C$.  Then there are
\begin{equation}
2^{2d-1}\cdot |C|
\label{eqn:signs}
\end{equation}
choices for $\sign_1,\ldots,\sign_N$ so that $\{\sign_1L_1,\ldots,\sign_NL_N\}$ also satisfy the Garden Algebra.  Here, $|C|$ means the number of elements of the code $C$.
\label{lem:signs}
\end{lemma}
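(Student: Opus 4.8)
The plan is to count the sign choices directly by exploiting the structure of a connected Adinkra as a quotient of the Hamming cube. First I would fix a set $\{L_1,\dots,L_N\}$ satisfying \eqref{eqn:garden} and recall that modifying the signs means replacing each $L_I$ by $\sign_I L_I$ for diagonal $\pm1$ matrices $\sign_I$. The key observation is that a diagonal sign change on the fermions, $\Psi_i \mapsto \epsilon_i \Psi_i$ for a fixed diagonal $E = \mathrm{diag}(\epsilon_1,\dots,\epsilon_d)$, sends every $L_I \mapsto E L_I$ simultaneously, and likewise a sign change $F$ on the bosons sends $L_I \mapsto L_I F$; both operations manifestly preserve \eqref{eqn:garden}. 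This already gives a subgroup of sign redefinitions of order $2^d \cdot 2^d / 2^{\,?}$ — the overall sign $E = F = -\Ione$ acts trivially, so one must be careful about the overcount. I would set up the group $G$ of all admissible sign-redefinitions $(\sign_1,\dots,\sign_N)$ as a subgroup of $(\zt^{\,d})^N$ and compute its order.

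The main step is to show $G$ has order $2^{2d-1}\cdot|C|$. I would argue that an admissible redefinition is determined by: (i) a sign flip on the $2d$ vertices of the quotient Adinkra, contributing a factor $2^{2d}$, modulo the global flip $-\Ione$ that does nothing, giving $2^{2d-1}$; together with (ii) the freedom, once we lift to the Hamming cube, to independently sign the edges within each fiber of the quotient map — equivalently, for each codeword $c \in C$ one gets an additional independent flip of a "parallel family" of edges, contributing $|C|$. Concretely, I would invoke the cube-quotient description recalled in Section~\ref{sec:adinkra}: the dashings of an Adinkra correspond to a function on the edges of the $N$-cube compatible with the code, and the gauge freedom is vertex sign flips; passing to the quotient by $C$ introduces exactly the extra factor $|C|$ because the kernel of "vertex flips on the cube $\to$ vertex flips on the quotient" is detected by $C$. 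Assembling, $|G| = 2^{2d-1}\cdot|C|$.

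The part I expect to be the main obstacle is making precise that \emph{every} admissible sign choice arises this way — i.e. that there are no "exotic" sign redefinitions beyond vertex flips and the code-induced flips. For this I would use that the Adinkra is connected: pick a spanning tree of the quotient graph, use vertex flips to normalize all tree edges to solid, and then show the signs on the remaining $d$ non-tree edges are forced by the closed-loop (cycle) conditions coming from \eqref{eqn:garden} — specifically that the product of signs around each $2$-colored $4$-cycle is $-1$ and around other cycles is constrained by the doubly-even code, leaving exactly $\log_2|C|$ binary degrees of freedom. This is the standard "Adinkra = quotient of cube by doubly even code with a $\zt$-valued dashing cocycle" argument; the bookkeeping of which cycles are independent is the one genuinely technical point, and I would handle it by reducing to the known classification results cited as \citen{at,r6-1}. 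Once the loop analysis is in place, the count $2^{2d-1}\cdot|C|$ follows by multiplying the tree-flip freedom $2^{2d-1}$ (vertex flips modulo the global one) by the residual loop freedom $|C|$.
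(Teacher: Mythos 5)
Your proposal is correct and takes essentially the same route as the paper: the factor $2^{2d-1}$ comes from vertex switches modulo the single global flip (using connectedness), and the factor $|C|$ from the fact that admissible dashings modulo vertex switching are in bijection with the doubly even code, a fact the paper likewise delegates to the cited classification literature rather than proving from scratch. Your spanning-tree/4-cycle sketch is just a partial unpacking of that same cited result, so there is no substantive difference in approach.
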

\begin{proof}
Suppose $\{L_1,\ldots,L_N\}$ is a set of $d\times d$ signed permutation matrices satisfying~(\ref{eqn:garden}).  Now, given a boson or a fermion, we can replace it with its negative.  On the level of $L_I$ matrices, this means multiplying the corresponding row (resp.\ column) of all of the $L_I$ matrices by $-1$.  On the level of Adinkras, this takes all solid edges incident to the corresponding vertex and replaces them with dashed edges, and replaces all dashed edges incident to that vertex and replaces them with solid edges.  This possibility was explained in Ref.~\citen{rA}, and was formalized in Refs.~\citen{zhang,cc,rDGW-AA}, where it was called {\em vertex switching}.  By composing these, we get $2^{2d}$ vertex switches, and, assuming the Adinkra is connected, we get the same dashing if and only if it is all of the bosons and all of the fermions that are vertex switched.  Therefore there are $2^{2d-1}$ sets of $\{L_1,\ldots,L_N\}$ that are obtained in this way.

The choices of signs for the $\{L_I\}$ matrices modulo vertex switching is in bijection with the code itself.\cite{cc,zhang}  The lemma follows.
\end{proof}

In the case of $\GR(4,4)$, there are
\[ 2^7\cdot 2 = 256 \]
possibilities for the sign for each of the six sets of permutation matrices.  Overall, then, there are 
\[ 6\cdot 256=1536 \]
choices for the $\{L_1,\ldots,L_4\}$ matrices satisfying the $\GR(d,N)$ algebra.  This was a fact first computed by exhaustive computer search in Ref.~\citen{pizza4}, but now we see the mathematical structure that gives rise to this fact.

Our approach followed the following steps.
\begin{itemize}
\item Deal with the permutations.
\begin{itemize}
\item Choose a canonical set $\{L_1, \ldots, L_4\}$ of matrices satisfying the $\GR(d,N)$ algebra, and take the corresponding set of permutations.  A good choice is available: $\VM_3$, whose set of permutations is $V$, the Vierergruppe.
\item Since $V$ is a subgroup, observe that every numbered $N=4$ valise Adinkra is related to this one via double cosets.
\item Since $V$ is normal, observe that these double cosets can be described as left cosets.
\item Find all left cosets of $V$, which is $\sym_4/V$.
\end{itemize}
\item Multiply by~(\ref{eqn:signs}) in Lemma~\ref{lem:signs} to find the total number of sets of $\{L_1,\ldots,L_4\}$ matrices.
\end{itemize}

\subsection{Duality}
One of the points of interest in Ref.~\citen{pizza4} was a certain duality operation, which sends each $L_I$ to $R_I=L_I{}^T=L_I{}^{-1}$.
Some collections $\{L_1,\ldots,L_N\}$ are preserved under this operation; that is,
\[ \{L_1{}^{-1},\ldots,L_4{}^{-1}\}=\{L_1,\ldots,L_4\}. \]
These are called {\em self-dual}.  Note that under this operation the $L_I$ are not necessarily equal to their inverses; all we require is that the inverse of each of the $L_I$ is another (possibly the same) $L_I$ within the same collection.

In order for a set $\{L_1,\ldots,L_4\}$ to be self-dual, it is necessary that their corresponding permutation set $\{|L_1|,\ldots,|L_4|\}$ be self-dual.  That is,
\[ \{|L_1|{}^{-1},\ldots,|L_4|{}^{-1}\}=\{|L_1|,\ldots,|L_4|\}. \]

The permutation sets that are self-dual were found in Ref.~\citen{pizza4} by direct computation.  Here we see that the coset notation can reproduce these findings easily.  As before, we write
\[ \sigma V = \{|L_1|,\ldots,|L_4|\} \]
and then we require $(\sigma V)^{-1}=\sigma V$.

Now $(\sigma V)^{-1}=V^{-1}\sigma^{-1}=V\sigma^{-1}=\sigma^{-1}V$, where we have used the fact that $V$ is normal.  Therefore, $\sigma V$ is self-dual if and only if $\sigma V$ is an element of order 1 or 2 in $\sym_4/V\cong \sym_3$.

Hence the following are self-dual:
\begin{equation}
\begin{array}{l@{\>=\>}r@{\>=\>}l}
\VM_3 &       V &\{(\,),(1\,2)(3\,4), (1\,3)(2\,4), (1\,4)(2\,3)\},\\
\VM_2 & (1\,2)V &\{(1\,2), (3\,4), (1\,3\,2\,4), (1\,4\,2\,3)\},\\
\VM_1 & (1\,3)V &\{(1\,3), (1\,2\,3\,4), (2\,4), (1\,4\,3\,2)\},\\
\VM   & (2\,3)V &\{(2\,3), (1\,3\,4\,2), (1\,2\,4\,3), (1\,4)\}.
\end{array}
\end{equation}
This duality also swaps the following:
\begin{equation}
\begin{array}{l@{\>=\>}r@{\>=\>}l}
 \CM &(1\,2\,3)V &\{(1\,2\,3), (1\,3\,4), (2\,4\,3), (1\,4\,2)\},\\
 \TM &(1\,3\,2)V &\{(1\,3\,2), (2\,3\,4), (1\,2\,4), (1\,4\,3)\}.
\end{array}
\end{equation}
This reproduces the result in Ref.~\citen{pizza4}, together with the fact that in $\VM_3$, each element is itself self-dual.

In turn, in each of $\VM,\VM_1$ and $\VM_2$, two of the elements are of order-2 and self-dual, while duality exchanges the remaining two (order-4) elements. For example,
\begin{equation}
 \begin{aligned}
  \VM_1 &=\{(1\,3), (1\,2\,3\,4), (2\,4), (1\,4\,3\,2)\}\\
        &\mapsto\{(1\,3), (1\,4\,3\,1), (2\,4), (1\,1\,3\,4)\}
\end{aligned}
 ~\big\updownarrow\rlap{$\vcenter{\hbox{$\scriptstyle L_2\leftrightarrow L_4$}}$}
\end{equation}
so that in $\VM_1$, duality effectively swaps the action of the corresponding supersymmetries. Since the same four supersymmetries thereby act differently on $\VM_1$ and on its dual, using both of these types of supermultiplets jointly in a model will provide for a richer dynamics than using only one of the two types; they are ``usefully distinct''~\cite{rChiLin}, not unlike the better known example of chiral and twisted chiral superfields in 1+1-dimensional spacetime~\cite{rGHR,rTwSJG0}.

\section{$N=8$}
\label{sec:n8}
Many of the ideas from $N=4$, $d=4$ also work for $N=8$, $d=8$.  The reader can check that for $N=8$, the minimal number for $d$ (the number of bosons), according to~(\ref{eqn:dmin}), is $d=8$.  The Hamming cube $\{0,1\}^8$ has $2^8=256$ vertices, so $128$ bosons and $128$ fermions, but by quotienting by a code of size $2^4=16$, we will get $d=8$ bosons and $d=8$ fermions.  Indeed, the unique\cite{rCHVP} maximal doubly even code for $N=8$ is $e_8$, which is generated by the rows of the matrix:\footnote{The columns of this code have been permuted from the description of $e_8$ on p. 373 of Ref.~\citen{rCHVP} for unimportant, but notationally convenient, reasons.  This is a description that is closer to the version found on p. 6 of the same reference, where it is called $\widehat{\mathcal{H}}_3$, the Hamming $(8,4)$ code.  The notation $e_8$ relates to the lattice $E_8$, which, by Construction A, involves taking the integer $8$-tuples whose reduction modulo $2$ is in the code.  This, in turn, relates to the Lie group $E_8$ whose roots are the nonzero lattice points closest to the origin.\cite{rCS}}
\[ e_8=\left[\begin{array}{cccccccc}
0&1&1&1&1&0&0&0\\
1&0&1&1&0&1&0&0\\
1&1&0&1&0&0&1&0\\
1&1&1&0&0&0&0&1
\end{array}\right] \]
The code consists of all linear combinations of these four rows, so has $2^4=16$ elements.

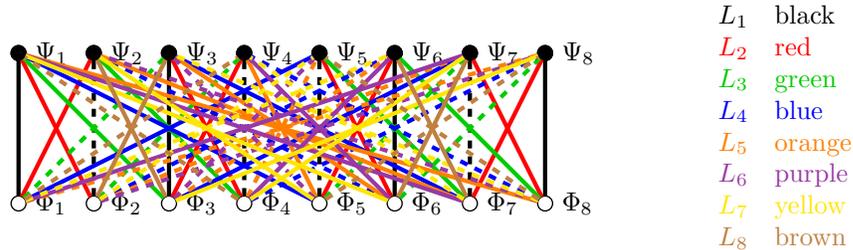
\begin{figure}[ht]
\begin{center}
\begin{tikzpicture}
\GraphInit[vstyle=Welsh]
\SetVertexNormal[MinSize=5pt]
\SetUpEdge[labelstyle={draw},style={ultra thick}]
\tikzset{Dash/.style = {ultra thick, dashed}}
\Vertex[x=0,y=0,Math,L={\Phi_1}]{B1}
\Vertex[x=1,y=0,Math,L={\Phi_2}]{B2}
\Vertex[x=2,y=0,Math,L={\Phi_3}]{B3}
\Vertex[x=3,y=0,Math,L={\Phi_4}]{B4}
\Vertex[x=4,y=0,Math,L={\Phi_5}]{B5}
\Vertex[x=5,y=0,Math,L={\Phi_6}]{B6}
\Vertex[x=6,y=0,Math,L={\Phi_7}]{B7}
\Vertex[x=7,y=0,Math,L={\Phi_8}]{B8}
\Vertex[x=0,y=2,Math,L={\Psi_1}]{F1}
\Vertex[x=1,y=2,Math,L={\Psi_2}]{F2}
\Vertex[x=2,y=2,Math,L={\Psi_3}]{F3}
\Vertex[x=3,y=2,Math,L={\Psi_4}]{F4}
\Vertex[x=4,y=2,Math,L={\Psi_5}]{F5}
\Vertex[x=5,y=2,Math,L={\Psi_6}]{F6}
\Vertex[x=6,y=2,Math,L={\Psi_7}]{F7}
\Vertex[x=7,y=2,Math,L={\Psi_8}]{F8}
\AddVertexColor{black}{F1,F2,F3,F4,F5,F6,F7,F8}
\Edge[color=black](B1)(F1)
\Edge[color=black,style=Dash](B2)(F2)
\Edge[color=black](B3)(F3)
\Edge[color=black,style=Dash](B4)(F4)
\Edge[color=black,style=Dash](B5)(F5)
\Edge[color=black](B6)(F6)
\Edge[color=black,style=Dash](B7)(F7)
\Edge[color=black](B8)(F8)
\Edge[color=red](B1)(F2)
\Edge[color=red](B2)(F1)
\Edge[color=red](B3)(F4)
\Edge[color=red](B4)(F3)
\Edge[color=red](B5)(F6)
\Edge[color=red](B6)(F5)
\Edge[color=red](B7)(F8)
\Edge[color=red](B8)(F7)
\Edge[color=green,style=Dash](B1)(F3)
\Edge[color=green,style=Dash](B2)(F4)
\Edge[color=green](B3)(F1)
\Edge[color=green](B4)(F2)
\Edge[color=green,style=Dash](B5)(F7)
\Edge[color=green,style=Dash](B6)(F8)
\Edge[color=green](B7)(F5)
\Edge[color=green](B8)(F6)
\Edge[color=blue](B1)(F5)
\Edge[color=blue,style=Dash](B2)(F6)
\Edge[color=blue](B3)(F7)
\Edge[color=blue,style=Dash](B4)(F8)
\Edge[color=blue](B5)(F1)
\Edge[color=blue,style=Dash](B6)(F2)
\Edge[color=blue](B7)(F3)
\Edge[color=blue,style=Dash](B8)(F4)
\Edge[color=orange,style=Dash](B1)(F8)
\Edge[color=orange,style=Dash](B2)(F7)
\Edge[color=orange,style=Dash](B3)(F6)
\Edge[color=orange,style=Dash](B4)(F5)
\Edge[color=orange](B5)(F4)
\Edge[color=orange](B6)(F3)
\Edge[color=orange](B7)(F2)
\Edge[color=orange](B8)(F1)
\Edge[color=purple](B1)(F7)
\Edge[color=purple,style=Dash](B2)(F8)
\Edge[color=purple,style=Dash](B3)(F5)
\Edge[color=purple](B4)(F6)
\Edge[color=purple,style=Dash](B5)(F3)
\Edge[color=purple](B6)(F4)
\Edge[color=purple](B7)(F1)
\Edge[color=purple,style=Dash](B8)(F2)
\Edge[color=yellow,style=Dash](B1)(F6)
\Edge[color=yellow,style=Dash](B2)(F5)
\Edge[color=yellow](B3)(F8)
\Edge[color=yellow](B4)(F7)
\Edge[color=yellow](B5)(F2)
\Edge[color=yellow](B6)(F1)
\Edge[color=yellow,style=Dash](B7)(F4)
\Edge[color=yellow,style=Dash](B8)(F3)
\Edge[color=brown,style=Dash](B1)(F4)
\Edge[color=brown](B2)(F3)
\Edge[color=brown](B3)(F2)
\Edge[color=brown,style=Dash](B4)(F1)
\Edge[color=brown,style=Dash](B5)(F8)
\Edge[color=brown](B6)(F7)
\Edge[color=brown](B7)(F6)
\Edge[color=brown,style=Dash](B8)(F5)
\node [right] at (9,1) {
\begin{tabular}{rl}
$L_1$&black\\
{\color{red} $L_2$}&{\color{red} red}\\
{\color{green} $L_3$}&{\color{green} green}\\
{\color{blue} $L_4$}&{\color{blue} blue}\\
{\color{orange} $L_5$}&{\color{orange} orange}\\
{\color{purple} $L_6$}&{\color{purple} purple}\\
{\color{yellow} $L_7$}&{\color{yellow} yellow}\\
{\color{brown} $L_8$}&{\color{brown} brown}
\end{tabular}
};
\end{tikzpicture}
\end{center}
\caption{An Adinkra with $N=8$ and $d=8$.}
\label{fig:adinkra8}
\end{figure}

An example of an Adinkra with $N=8$, $d=8$ is shown in Figure~\ref{fig:adinkra8}.  This provides the following permutations:
\begin{equation}
\begin{aligned}
a_1=|L_1|&=(\,),\\
a_2=|L_2|&=(1\,2)(3\,4)(5\,6)(7\,8),\\
a_3=|L_3|&=(1\,3)(2\,4)(5\,7)(6\,8),\\
a_4=|L_4|&=(1\,5)(2\,6)(3\,7)(4\,8),\\
a_5=|L_5|&=(1\,8)(2\,7)(3\,6)(4\,5),\\
a_6=|L_6|&=(1\,7)(2\,8)(3\,5)(4\,6),\\
a_7=|L_7|&=(1\,6)(2\,5)(3\,8)(4\,7),\\
a_8=|L_8|&=(1\,4)(2\,3)(5\,8)(6\,7).
\end{aligned}
 \label{e:A}
\end{equation}
Again, this set $A=\{a_1,\ldots,a_8\}$ is actually a group, and is isomorphic to $\bz_2\times\bz_2\times\bz_2$.  It is a subgroup of $\sym_8$, just as $V$ was a subgroup of $\sym_4$, but note that this time, $A$ is not a normal subgroup.

\begin{figure}[ht]
\begin{center}
\begin{picture}(100,100)(-10,-10)
\put(0,0){\circle*{5}}
\put(50,0){\circle*{5}}
\put(0,50){\circle*{5}}
\put(50,50){\circle*{5}}
\put(30,20){\circle*{5}}
\put(80,20){\circle*{5}}
\put(80,70){\circle*{5}}
\put(30,70){\circle*{5}}
\put(0,0){\line(1,0){50}}
\put(0,0){\line(0,1){50}}
\put(50,50){\line(-1,0){50}}
\put(50,50){\line(0,-1){50}}
\put(30,20){\line(1,0){17}}
\put(53,20){\line(1,0){27}}
\put(30,20){\line(0,1){27}}
\put(30,53){\line(0,1){17}}
\put(80,70){\line(-1,0){50}}
\put(80,70){\line(0,-1){50}}
\put(0,0){\line(3,2){30}}
\put(50,0){\line(3,2){30}}
\put(0,50){\line(3,2){30}}
\put(50,50){\line(3,2){30}}
\put(-10,-3){1}
\put(55,-8){2}
\put(-10,52){3}
\put(46,54){4}
\put(22,22){5}
\put(82,22){6}
\put(22,73){7}
\put(82,73){8}
\end{picture}
\end{center}
\caption{A cube.  This should be considered a three-dimensional affine space over the field $\zt$.  The group $A$ should be thought of as translations modulo $2$.}
\label{fig:cube}
\end{figure}
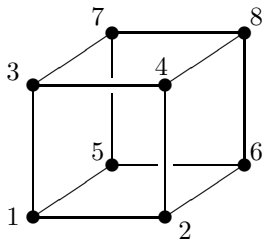

As with the Vierergruppe, these elements can be described in terms of the cube in Figure~\ref{fig:cube}.  More specifically, $a_2$, $a_3$, and $a_4$ are reflections in planes perpendicular to the faces, and the others are compositions of these.  An alternative interpretation is that the elements of $A$ are precisely the eight translations modulo $2$ of a three dimensional affine space defined over $\zt$, or if the reader prefers, translations of a periodic cubical lattice by half periods.

Comparing Figures~\ref{fig:square} and~\ref{fig:cube}, it is clear that the Vierergruppe has multiple images within the group $A$, corresponding to the multiple ways in which the $N=4$ supersymmetry algebra may be embedded in the $N=8$ algebra.

Again, since there is only one maximal doubly even code for $N=8$, every numbered valise corresponds to $A$ with a relabeling of the bosons and fermions.\cite{at}  If $\sigma$ and $\tau$ are the relabeling of the bosons and the fermions, respectively, then the $L_I$ matrices are of the form
\[ \sigma A \tau. \]
This can be written
\[ \sigma\tau ( \tau^{-1} A \tau) \]
or
\[ \pi B \]
where $\pi=\sigma\tau\in \sym_8$ and $B$ is a conjugate subgroup to $A$ in $\sym_8$.

Thus, we have:
\begin{theorem}
For $\GR(8,8)$, sets of permutations $\{|L_1|,\ldots,|L_8|\}$ are left cosets of conjugates of $A$ in $\sym_8$.
\end{theorem}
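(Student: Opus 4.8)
The proof is the $\GR(8,8)$ analogue of Theorem~\ref{thm:coset4}, and in fact the computation displayed just above the statement already carries it out; what follows is how I would organize and justify it. The key external input is the classification of Adinkras~\cite{at,r6-1}: a connected valise Adinkra with $N=8$ is a quotient of the Hamming $8$-cube by a maximal doubly even code of length $8$, and since $e_8$ is the unique such code, \emph{all} of these valises are isomorphic once we forget the dashings. (The quotient is connected because the $8$-cube is, so for $d=d_{\min}=8$ there is no disconnected case to worry about.) I would fix as reference the labeled realization whose permutation set is $A=\{a_1,\dots,a_8\}$ of~(\ref{e:A}).

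Next, take any $\{L_1,\dots,L_8\}$ satisfying~(\ref{eqn:garden}) and form $\{|L_1|,\dots,|L_8|\}$. Its valise Adinkra is isomorphic to the reference one; since the vertices carry a fixed boson/fermion (white/black) role, the relevant matching is recorded by a permutation $\sigma$ of the bosons together with a permutation $\tau$ of the fermions, exactly as in the $N=4$ discussion. Tracking how such a relabeling acts on the matrices — where, as noted for $N=4$, $v_i\mapsto\sigma v_i\tau$ — gives $\{|L_1|,\dots,|L_8|\}=\sigma A\tau$ as a set of permutations.

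The last step is pure group theory: because $V$ was normal we could collapse $\sigma V\tau$ to $\pi V$, but $A$ is \emph{not} normal in $\sym_8$, so the best we can do is write $\sigma A\tau=(\sigma\tau)(\tau^{-1}A\tau)=\pi B$ with $\pi:=\sigma\tau$ and $B:=\tau^{-1}A\tau$ a subgroup conjugate to $A$. Hence $\{|L_1|,\dots,|L_8|\}$ is a left coset of a conjugate of $A$, as claimed.

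I expect the one place needing genuine care — rather than bookkeeping — to be the rigidity step in the second paragraph: that a \emph{numbered} valise is determined up to an independent relabeling of bosons and of fermions by its (unlabeled, undashed) isomorphism type, which is where uniqueness of $e_8$ and the cited classification do the real work. Everything else is coset algebra, with the sole structural novelty, compared to $\GR(4,4)$, that the non-normality of $A$ forces conjugates into the statement. (For the promised non-computer count of the 151{,}200 permutation sets one additionally needs the converse — that every left coset of every conjugate of $A$ actually occurs — together with a tally of which pairs $(\sigma,\tau)$ produce the same coset; those are separate, and not needed for the theorem itself.)
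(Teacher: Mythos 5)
Your argument is essentially the paper's own: it invokes the uniqueness of the doubly even code $e_8$ and the Adinkra classification to write any permutation set as $\sigma A\tau$ for a boson relabeling $\sigma$ and fermion relabeling $\tau$, and then rewrites $\sigma A\tau=(\sigma\tau)(\tau^{-1}A\tau)=\pi B$ with $B$ conjugate to $A$, exactly as in the text preceding the theorem. No substantive difference, and your added remarks (connectedness, the converse being needed only for the count of $151{,}200$) are consistent with how the paper proceeds.
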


To count these sets of permutations, we need to count the conjugates of $A$ in $\sym_8$, and then count the left cosets of these conjugates.

To determine the set of conjugates to $A$, we let $\sym_8$ act on the set of conjugate subgroups of $A$ by conjugation.  The stabilizer of this action is $N(A)$, the normalizer of $A$ in $\sym_8$.

\begin{lemma}
The normalizer of $A$ in $\sym_8$ is the group of non-degenerate affine transformations
\begin{equation}
f(\vec{x})=L\vec{x}+\vec{t},
\label{eqn:affine}
\end{equation}
where $L$ is a $3\times 3$ non-degenerate matrix over $\bz_2$ and $\vec{t}$ is a translation vector in $\zt{}^3$.
\label{lem:normalizer} 
\end{lemma}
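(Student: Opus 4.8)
The plan is to prove the two inclusions separately, working throughout with the identification $\{1,\ldots,8\}\leftrightarrow\zt{}^3$ of Figure~\ref{fig:cube}, under which $A$ is precisely the group of translations: writing $a_{\vec v}$ for the element of $A$ that translates by $\vec v\in\zt{}^3$, one has $a_{\vec v}(\vec x)=\vec x+\vec v$, and $\vec v\mapsto a_{\vec v}$ is an isomorphism $\zt{}^3\xrightarrow{\sim}A$. This identification is exactly the description of $A$ given just before the lemma.

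One inclusion is a direct computation. Given an affine map $f(\vec x)=L\vec x+\vec t$ with $L\in\GL(3,\zt)$ and $\vec t\in\zt{}^3$, I would check that for every $\vec v$,
\[
 f\,a_{\vec v}\,f^{-1}(\vec x)=\vec x+L\vec v ,
\]
so that $f\,a_{\vec v}\,f^{-1}=a_{L\vec v}\in A$, whence $fAf^{-1}=A$. Since a non-degenerate affine transformation of $\zt{}^3$ is a bijection of the $8$-element set, it lies in $\sym_8$, and therefore the group of such $f$ is contained in $N(A)$.

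For the reverse inclusion, the key point is that $A$ acts \emph{simply transitively} on $\zt{}^3$, so the orbit map $a\mapsto a(\vec 0)$ is a bijection $A\to\zt{}^3$. Let $\pi\in N(A)$. Conjugation by $\pi$ is an automorphism $\phi$ of $A$, and since $A\cong\zt{}^3$ has $\mathrm{Aut}(A)=\GL(3,\zt)$, there is a fixed $L\in\GL(3,\zt)$ with $\phi(a_{\vec v})=a_{L\vec v}$ for all $\vec v$. Setting $\vec t:=\pi(\vec 0)$, for every $\vec v$ we obtain
\[
 \pi(\vec v)=\pi\big(a_{\vec v}(\vec 0)\big)=\big(\pi a_{\vec v}\pi^{-1}\big)\big(\pi(\vec 0)\big)=\phi(a_{\vec v})(\vec t)=a_{L\vec v}(\vec t)=L\vec v+\vec t ,
\]
so $\pi$ is the affine map $\vec x\mapsto L\vec x+\vec t$, which proves the claim.

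None of these steps is hard; the only thing requiring care is the bookkeeping in the simply-transitive identification — in particular, checking that conjugation by $\pi$ really induces, through the orbit bijection, the linear part $L$ of $\pi$, and that the translation part is recovered simply as $\pi(\vec 0)$. It is also worth recording for the subsequent counting that this description identifies $N(A)$ with the full affine group $\mathrm{AGL}(3,\zt)$, of order $|\GL(3,\zt)|\cdot 2^3=168\cdot 8=1344$.
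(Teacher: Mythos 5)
Your proof is correct, and it proves the same two inclusions with the same underlying decomposition of a normalizer element into a linear part and a translation part; the difference lies in how you handle the converse inclusion. The paper works directly with the functional equation $f(f^{-1}(x)+y)=x+y'$ expressing membership in $N(A)$, translates $f$ so that it fixes $\vec{0}$, and then derives additivity $g(u+y)=g(u)+g(y)$ by hand, concluding that the normalized map is $\zt$-linear. You instead invoke the structural facts that conjugation by $\pi\in N(A)$ restricts to an automorphism of $A$ and that $\mathrm{Aut}(\zt{}^3)\cong\GL_3(\zt)$, and then use the simply transitive action (the orbit bijection $a\mapsto a(\vec{0})$) to reconstruct $\pi(\vec{v})=L\vec{v}+\vec{t}$ with $\vec{t}=\pi(\vec{0})$. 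Your route is shorter and makes explicit the standard picture of the affine group as the normalizer (holomorph) of the translation subgroup inside the symmetric group, at the cost of quoting the automorphism-group fact that the paper in effect re-derives by its elementary computation; the paper's version is more self-contained. Both arguments equally support the subsequent count $|N(A)|=8\cdot 168=1344$, which you also record.
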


A $3\times 3$ non-degenerate matrix over $\zt$ is an element of $\GL_3(\zt)$, and since the determinant is an element of $\zt$ and is non-zero, it must be 1.  So $\GL_3(\zt)=\SL_3(\zt)$, which is the second smallest non-cyclic simple group, with order $168$.\footnote{This fact, and other details about this group can be found in Ref.~\citen{ATLAS} or online in Ref.~\citen{ATLASO}, where this group is called $L_3(2)$.}  In this way the set of affine transformations forms a semidirect product of $A$ and $\SL_3(\zt)$:  
\[ N(A)=A\rtimes \SL_3(\zt). \]

\begin{proof}
For a permutation $f:A\to A$ to be in the normalizer of $A$ means that for every $y\in A$, there is a $y'\in A$ so that for all $x\in A$,
\begin{equation}
f(f^{-1}(x)+y)=x+y'.
\label{eqn:conjugate}
\end{equation}
It is straightforward to see that functions $f$ of the form in~(\ref{eqn:affine}) satisfy this equation.

Conversely, suppose $f$ is a permutation that satisfies~(\ref{eqn:conjugate}).  Let $x_0=f^{-1}(\vec{0})$.  Define
\[ g(x)=f(x+x_0). \]
Then $g(\vec{0})=f(x_0)=\vec{0}$.  Computing the inverse of $g$ we get:
\[ g^{-1}(x)=f^{-1}(x)-x_0 \]
and then we see that
\begin{align*}
g(g^{-1}(x)+y),
&=f(f^{-1}(x)-x_0+y+x_0),\\
&=f(f^{-1}(x)+y),\\
&=x+y'.
\end{align*}
Therefore $g$ satisfies~(\ref{eqn:conjugate}) with $g(\vec{0})=\vec{0}$.

Letting $x=\vec{0}$ in this equation gives
\[ g(y)=y' \]
so we get
\[ g(g^{-1}(x)+y)=x+g(y) \]
and letting $u=g^{-1}(x)$ we get
\[ g(u+y)=g(u)+g(y) \]
so that $g$ is a group homomorphism.  For vector spaces over $\zt$, this is equivalent to being a linear transformation.  Therefore
\[ g(x)=Lx \]
and
\[ f(x)=g(x-x_0)=Lx-Lx_0 \]
which is an affine transformation.
\end{proof}

The group $A$ has $8$ elements, and the group $\SL_3(\zt)$ has $168$ elements,\cite{ATLAS,ATLASO} so the normalizer has $8\cdot 168=1344$ elements.  Therefore the number of conjugate subgroups of $A$ in $\sym_8$ is $8!/1344=30$.

For each conjugate subgroup, the number of left cosets is $8!/8=7!=5040$.  Overall, then, there are $5040\cdot 30 = 151{,}200$ such choices of $\{|L_1|,\ldots,|L_8|\}$ matrices.

The formula in Lemma~\ref{lem:signs} provides the number of sign matrices as
\[ 2^{16-1}\cdot 2^{4}=2^{19}=524{,}288. \]

Overall, then, the number of $\{L_1,\ldots,L_8\}$ sets is
\[ 151{,}200\cdot 524{,}288 = 79{,}272{,}345{,}600. \]

\begin{equation}
\begin{array}{r|r|r}
&\MC1{c|}{N=4}&\MC1c{N=8}\\\hline
\text{conjugates}&1&30\\
\text{left cosets each}&6&5040\\
\{|L_1|,\ldots,|L_N|\}&6&151{,}200\\
\text{signs}&256&524{,}288\\
\{L_1,\ldots,L_N\}&1536&79{,}272{,}345{,}600
\end{array}
\end{equation}

To find the $30$ conjugates of $A$, we first note that conjugation by a permutation is simply letting the permutation act on the labels of the cube.  The idea is to pick one such labeling for every conjugate, so we use $N(A)=A\rtimes \SL_3(\zt)$ to put the cube in a ``canonical'' position.

For instance, we can use translation by $A$ to put vertex $1$ at the lower front left (the ``origin'' of $\zt{}^3$).  We can then use a linear transformation to bring vertex $2$ to the lower front right, and vertex $3$ to the upper front left.  The result is the cube in Figure~\ref{fig:cubepart}.

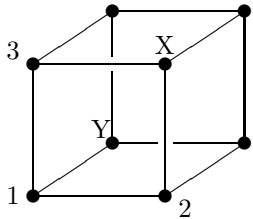
\begin{figure}[ht]
\begin{center}
\begin{picture}(100,100)(-10,-10)
\put(0,0){\circle*{5}}
\put(50,0){\circle*{5}}
\put(0,50){\circle*{5}}
\put(50,50){\circle*{5}}
\put(30,20){\circle*{5}}
\put(80,20){\circle*{5}}
\put(80,70){\circle*{5}}
\put(30,70){\circle*{5}}
\put(0,0){\line(1,0){50}}
\put(0,0){\line(0,1){50}}
\put(50,50){\line(-1,0){50}}
\put(50,50){\line(0,-1){50}}
\put(30,20){\line(1,0){17}}
\put(53,20){\line(1,0){27}}
\put(30,20){\line(0,1){27}}
\put(30,53){\line(0,1){17}}
\put(80,70){\line(-1,0){50}}
\put(80,70){\line(0,-1){50}}
\put(0,0){\line(3,2){30}}
\put(50,0){\line(3,2){30}}
\put(0,50){\line(3,2){30}}
\put(50,50){\line(3,2){30}}
\put(-10,52){3}
\put(46,54){X}
\put(-10,-3){1}
\put(55,-8){2}
\put(22,22){Y}
\end{picture}
\end{center}
\caption{Cube partly labeled.  We put $1$ at the ``origin'' and put $2$ and $3$ in standard basis position.  Positions X and Y are for reference in the discussion following.}
\label{fig:cubepart}
\end{figure}

At the next phase, we have two cases, depending on whether $5$ is in position X or not.  If $5$ is not in position X, we can use a linear transformation to bring it to position Y.  Then the linear transformation is determined, and there are $4!=24$ ways to distribute the other vertices.  If $5$ is in position X, then we can use a linear transformation to bring $4$ into position Y, and there are $3!=6$ ways to distribute the other vertices.

The resulting list of thirty cubes are shown in Figure~\ref{fig:thirty}.  For each, the set of translations modulo $2$ gives a conjugate of $A$.

\begin{figure}[ht]
{\scriptsize
\newlength{\templen}
\setlength{\templen}{\unitlength}
\setlength{\unitlength}{.007in}
\begin{center}
\begin{picture}(100,100)(-10,-10)
\put(0,0){\circle*{5}}
\put(50,0){\circle*{5}}
\put(0,50){\circle*{5}}
\put(50,50){\circle*{5}}
\put(30,20){\circle*{5}}
\put(80,20){\circle*{5}}
\put(80,70){\circle*{5}}
\put(30,70){\circle*{5}}
\put(0,0){\line(1,0){50}}
\put(0,0){\line(0,1){50}}
\put(50,50){\line(-1,0){50}}
\put(50,50){\line(0,-1){50}}
\put(30,20){\line(1,0){17}}
\put(53,20){\line(1,0){27}}
\put(30,20){\line(0,1){27}}
\put(30,53){\line(0,1){17}}
\put(80,70){\line(-1,0){50}}
\put(80,70){\line(0,-1){50}}
\put(0,0){\line(3,2){30}}
\put(50,0){\line(3,2){30}}
\put(0,50){\line(3,2){30}}
\put(50,50){\line(3,2){30}}
\put(-10,-3){1}
\put(55,-8){2}
\put(-10,52){3}
\put(46,54){4}
\put(22,22){5}
\put(82,22){6}
\put(22,73){7}
\put(82,73){8}
\end{picture}
\begin{picture}(100,100)(-10,-10)
\put(0,0){\circle*{5}}
\put(50,0){\circle*{5}}
\put(0,50){\circle*{5}}
\put(50,50){\circle*{5}}
\put(30,20){\circle*{5}}
\put(80,20){\circle*{5}}
\put(80,70){\circle*{5}}
\put(30,70){\circle*{5}}
\put(0,0){\line(1,0){50}}
\put(0,0){\line(0,1){50}}
\put(50,50){\line(-1,0){50}}
\put(50,50){\line(0,-1){50}}
\put(30,20){\line(1,0){17}}
\put(53,20){\line(1,0){27}}
\put(30,20){\line(0,1){27}}
\put(30,53){\line(0,1){17}}
\put(80,70){\line(-1,0){50}}
\put(80,70){\line(0,-1){50}}
\put(0,0){\line(3,2){30}}
\put(50,0){\line(3,2){30}}
\put(0,50){\line(3,2){30}}
\put(50,50){\line(3,2){30}}
\put(-10,-3){1}
\put(55,-8){2}
\put(-10,52){3}
\put(46,54){4}
\put(22,22){5}
\put(82,22){6}
\put(22,73){8}
\put(82,73){7}
\end{picture}
\begin{picture}(100,100)(-10,-10)
\put(0,0){\circle*{5}}
\put(50,0){\circle*{5}}
\put(0,50){\circle*{5}}
\put(50,50){\circle*{5}}
\put(30,20){\circle*{5}}
\put(80,20){\circle*{5}}
\put(80,70){\circle*{5}}
\put(30,70){\circle*{5}}
\put(0,0){\line(1,0){50}}
\put(0,0){\line(0,1){50}}
\put(50,50){\line(-1,0){50}}
\put(50,50){\line(0,-1){50}}
\put(30,20){\line(1,0){17}}
\put(53,20){\line(1,0){27}}
\put(30,20){\line(0,1){27}}
\put(30,53){\line(0,1){17}}
\put(80,70){\line(-1,0){50}}
\put(80,70){\line(0,-1){50}}
\put(0,0){\line(3,2){30}}
\put(50,0){\line(3,2){30}}
\put(0,50){\line(3,2){30}}
\put(50,50){\line(3,2){30}}
\put(-10,-3){1}
\put(55,-8){2}
\put(-10,52){3}
\put(46,54){4}
\put(22,22){5}
\put(82,22){7}
\put(22,73){6}
\put(82,73){8}
\end{picture}
\begin{picture}(100,100)(-10,-10)
\put(0,0){\circle*{5}}
\put(50,0){\circle*{5}}
\put(0,50){\circle*{5}}
\put(50,50){\circle*{5}}
\put(30,20){\circle*{5}}
\put(80,20){\circle*{5}}
\put(80,70){\circle*{5}}
\put(30,70){\circle*{5}}
\put(0,0){\line(1,0){50}}
\put(0,0){\line(0,1){50}}
\put(50,50){\line(-1,0){50}}
\put(50,50){\line(0,-1){50}}
\put(30,20){\line(1,0){17}}
\put(53,20){\line(1,0){27}}
\put(30,20){\line(0,1){27}}
\put(30,53){\line(0,1){17}}
\put(80,70){\line(-1,0){50}}
\put(80,70){\line(0,-1){50}}
\put(0,0){\line(3,2){30}}
\put(50,0){\line(3,2){30}}
\put(0,50){\line(3,2){30}}
\put(50,50){\line(3,2){30}}
\put(-10,-3){1}
\put(55,-8){2}
\put(-10,52){3}
\put(46,54){4}
\put(22,22){5}
\put(82,22){7}
\put(22,73){8}
\put(82,73){6}
\end{picture}
\begin{picture}(100,100)(-10,-10)
\put(0,0){\circle*{5}}
\put(50,0){\circle*{5}}
\put(0,50){\circle*{5}}
\put(50,50){\circle*{5}}
\put(30,20){\circle*{5}}
\put(80,20){\circle*{5}}
\put(80,70){\circle*{5}}
\put(30,70){\circle*{5}}
\put(0,0){\line(1,0){50}}
\put(0,0){\line(0,1){50}}
\put(50,50){\line(-1,0){50}}
\put(50,50){\line(0,-1){50}}
\put(30,20){\line(1,0){17}}
\put(53,20){\line(1,0){27}}
\put(30,20){\line(0,1){27}}
\put(30,53){\line(0,1){17}}
\put(80,70){\line(-1,0){50}}
\put(80,70){\line(0,-1){50}}
\put(0,0){\line(3,2){30}}
\put(50,0){\line(3,2){30}}
\put(0,50){\line(3,2){30}}
\put(50,50){\line(3,2){30}}
\put(-10,-3){1}
\put(55,-8){2}
\put(-10,52){3}
\put(46,54){4}
\put(22,22){5}
\put(82,22){8}
\put(22,73){6}
\put(82,73){7}
\end{picture}
\begin{picture}(100,100)(-10,-10)
\put(0,0){\circle*{5}}
\put(50,0){\circle*{5}}
\put(0,50){\circle*{5}}
\put(50,50){\circle*{5}}
\put(30,20){\circle*{5}}
\put(80,20){\circle*{5}}
\put(80,70){\circle*{5}}
\put(30,70){\circle*{5}}
\put(0,0){\line(1,0){50}}
\put(0,0){\line(0,1){50}}
\put(50,50){\line(-1,0){50}}
\put(50,50){\line(0,-1){50}}
\put(30,20){\line(1,0){17}}
\put(53,20){\line(1,0){27}}
\put(30,20){\line(0,1){27}}
\put(30,53){\line(0,1){17}}
\put(80,70){\line(-1,0){50}}
\put(80,70){\line(0,-1){50}}
\put(0,0){\line(3,2){30}}
\put(50,0){\line(3,2){30}}
\put(0,50){\line(3,2){30}}
\put(50,50){\line(3,2){30}}
\put(-10,-3){1}
\put(55,-8){2}
\put(-10,52){3}
\put(46,54){4}
\put(22,22){5}
\put(82,22){8}
\put(22,73){7}
\put(82,73){6}
\end{picture}
\begin{picture}(100,100)(-10,-10)
\put(0,0){\circle*{5}}
\put(50,0){\circle*{5}}
\put(0,50){\circle*{5}}
\put(50,50){\circle*{5}}
\put(30,20){\circle*{5}}
\put(80,20){\circle*{5}}
\put(80,70){\circle*{5}}
\put(30,70){\circle*{5}}
\put(0,0){\line(1,0){50}}
\put(0,0){\line(0,1){50}}
\put(50,50){\line(-1,0){50}}
\put(50,50){\line(0,-1){50}}
\put(30,20){\line(1,0){17}}
\put(53,20){\line(1,0){27}}
\put(30,20){\line(0,1){27}}
\put(30,53){\line(0,1){17}}
\put(80,70){\line(-1,0){50}}
\put(80,70){\line(0,-1){50}}
\put(0,0){\line(3,2){30}}
\put(50,0){\line(3,2){30}}
\put(0,50){\line(3,2){30}}
\put(50,50){\line(3,2){30}}
\put(-10,-3){1}
\put(55,-8){2}
\put(-10,52){3}
\put(46,54){6}
\put(22,22){5}
\put(82,22){4}
\put(22,73){7}
\put(82,73){8}
\end{picture}
\begin{picture}(100,100)(-10,-10)
\put(0,0){\circle*{5}}
\put(50,0){\circle*{5}}
\put(0,50){\circle*{5}}
\put(50,50){\circle*{5}}
\put(30,20){\circle*{5}}
\put(80,20){\circle*{5}}
\put(80,70){\circle*{5}}
\put(30,70){\circle*{5}}
\put(0,0){\line(1,0){50}}
\put(0,0){\line(0,1){50}}
\put(50,50){\line(-1,0){50}}
\put(50,50){\line(0,-1){50}}
\put(30,20){\line(1,0){17}}
\put(53,20){\line(1,0){27}}
\put(30,20){\line(0,1){27}}
\put(30,53){\line(0,1){17}}
\put(80,70){\line(-1,0){50}}
\put(80,70){\line(0,-1){50}}
\put(0,0){\line(3,2){30}}
\put(50,0){\line(3,2){30}}
\put(0,50){\line(3,2){30}}
\put(50,50){\line(3,2){30}}
\put(-10,-3){1}
\put(55,-8){2}
\put(-10,52){3}
\put(46,54){6}
\put(22,22){5}
\put(82,22){4}
\put(22,73){8}
\put(82,73){7}
\end{picture}
\begin{picture}(100,100)(-10,-10)
\put(0,0){\circle*{5}}
\put(50,0){\circle*{5}}
\put(0,50){\circle*{5}}
\put(50,50){\circle*{5}}
\put(30,20){\circle*{5}}
\put(80,20){\circle*{5}}
\put(80,70){\circle*{5}}
\put(30,70){\circle*{5}}
\put(0,0){\line(1,0){50}}
\put(0,0){\line(0,1){50}}
\put(50,50){\line(-1,0){50}}
\put(50,50){\line(0,-1){50}}
\put(30,20){\line(1,0){17}}
\put(53,20){\line(1,0){27}}
\put(30,20){\line(0,1){27}}
\put(30,53){\line(0,1){17}}
\put(80,70){\line(-1,0){50}}
\put(80,70){\line(0,-1){50}}
\put(0,0){\line(3,2){30}}
\put(50,0){\line(3,2){30}}
\put(0,50){\line(3,2){30}}
\put(50,50){\line(3,2){30}}
\put(-10,-3){1}
\put(55,-8){2}
\put(-10,52){3}
\put(46,54){6}
\put(22,22){5}
\put(82,22){7}
\put(22,73){4}
\put(82,73){8}
\end{picture}
\begin{picture}(100,100)(-10,-10)
\put(0,0){\circle*{5}}
\put(50,0){\circle*{5}}
\put(0,50){\circle*{5}}
\put(50,50){\circle*{5}}
\put(30,20){\circle*{5}}
\put(80,20){\circle*{5}}
\put(80,70){\circle*{5}}
\put(30,70){\circle*{5}}
\put(0,0){\line(1,0){50}}
\put(0,0){\line(0,1){50}}
\put(50,50){\line(-1,0){50}}
\put(50,50){\line(0,-1){50}}
\put(30,20){\line(1,0){17}}
\put(53,20){\line(1,0){27}}
\put(30,20){\line(0,1){27}}
\put(30,53){\line(0,1){17}}
\put(80,70){\line(-1,0){50}}
\put(80,70){\line(0,-1){50}}
\put(0,0){\line(3,2){30}}
\put(50,0){\line(3,2){30}}
\put(0,50){\line(3,2){30}}
\put(50,50){\line(3,2){30}}
\put(-10,-3){1}
\put(55,-8){2}
\put(-10,52){3}
\put(46,54){6}
\put(22,22){5}
\put(82,22){7}
\put(22,73){8}
\put(82,73){4}
\end{picture}
\begin{picture}(100,100)(-10,-10)
\put(0,0){\circle*{5}}
\put(50,0){\circle*{5}}
\put(0,50){\circle*{5}}
\put(50,50){\circle*{5}}
\put(30,20){\circle*{5}}
\put(80,20){\circle*{5}}
\put(80,70){\circle*{5}}
\put(30,70){\circle*{5}}
\put(0,0){\line(1,0){50}}
\put(0,0){\line(0,1){50}}
\put(50,50){\line(-1,0){50}}
\put(50,50){\line(0,-1){50}}
\put(30,20){\line(1,0){17}}
\put(53,20){\line(1,0){27}}
\put(30,20){\line(0,1){27}}
\put(30,53){\line(0,1){17}}
\put(80,70){\line(-1,0){50}}
\put(80,70){\line(0,-1){50}}
\put(0,0){\line(3,2){30}}
\put(50,0){\line(3,2){30}}
\put(0,50){\line(3,2){30}}
\put(50,50){\line(3,2){30}}
\put(-10,-3){1}
\put(55,-8){2}
\put(-10,52){3}
\put(46,54){6}
\put(22,22){5}
\put(82,22){8}
\put(22,73){4}
\put(82,73){7}
\end{picture}
\begin{picture}(100,100)(-10,-10)
\put(0,0){\circle*{5}}
\put(50,0){\circle*{5}}
\put(0,50){\circle*{5}}
\put(50,50){\circle*{5}}
\put(30,20){\circle*{5}}
\put(80,20){\circle*{5}}
\put(80,70){\circle*{5}}
\put(30,70){\circle*{5}}
\put(0,0){\line(1,0){50}}
\put(0,0){\line(0,1){50}}
\put(50,50){\line(-1,0){50}}
\put(50,50){\line(0,-1){50}}
\put(30,20){\line(1,0){17}}
\put(53,20){\line(1,0){27}}
\put(30,20){\line(0,1){27}}
\put(30,53){\line(0,1){17}}
\put(80,70){\line(-1,0){50}}
\put(80,70){\line(0,-1){50}}
\put(0,0){\line(3,2){30}}
\put(50,0){\line(3,2){30}}
\put(0,50){\line(3,2){30}}
\put(50,50){\line(3,2){30}}
\put(-10,-3){1}
\put(55,-8){2}
\put(-10,52){3}
\put(46,54){6}
\put(22,22){5}
\put(82,22){8}
\put(22,73){7}
\put(82,73){4}
\end{picture}
\begin{picture}(100,100)(-10,-10)
\put(0,0){\circle*{5}}
\put(50,0){\circle*{5}}
\put(0,50){\circle*{5}}
\put(50,50){\circle*{5}}
\put(30,20){\circle*{5}}
\put(80,20){\circle*{5}}
\put(80,70){\circle*{5}}
\put(30,70){\circle*{5}}
\put(0,0){\line(1,0){50}}
\put(0,0){\line(0,1){50}}
\put(50,50){\line(-1,0){50}}
\put(50,50){\line(0,-1){50}}
\put(30,20){\line(1,0){17}}
\put(53,20){\line(1,0){27}}
\put(30,20){\line(0,1){27}}
\put(30,53){\line(0,1){17}}
\put(80,70){\line(-1,0){50}}
\put(80,70){\line(0,-1){50}}
\put(0,0){\line(3,2){30}}
\put(50,0){\line(3,2){30}}
\put(0,50){\line(3,2){30}}
\put(50,50){\line(3,2){30}}
\put(-10,-3){1}
\put(55,-8){2}
\put(-10,52){3}
\put(46,54){7}
\put(22,22){5}
\put(82,22){4}
\put(22,73){6}
\put(82,73){8}
\end{picture}
\begin{picture}(100,100)(-10,-10)
\put(0,0){\circle*{5}}
\put(50,0){\circle*{5}}
\put(0,50){\circle*{5}}
\put(50,50){\circle*{5}}
\put(30,20){\circle*{5}}
\put(80,20){\circle*{5}}
\put(80,70){\circle*{5}}
\put(30,70){\circle*{5}}
\put(0,0){\line(1,0){50}}
\put(0,0){\line(0,1){50}}
\put(50,50){\line(-1,0){50}}
\put(50,50){\line(0,-1){50}}
\put(30,20){\line(1,0){17}}
\put(53,20){\line(1,0){27}}
\put(30,20){\line(0,1){27}}
\put(30,53){\line(0,1){17}}
\put(80,70){\line(-1,0){50}}
\put(80,70){\line(0,-1){50}}
\put(0,0){\line(3,2){30}}
\put(50,0){\line(3,2){30}}
\put(0,50){\line(3,2){30}}
\put(50,50){\line(3,2){30}}
\put(-10,-3){1}
\put(55,-8){2}
\put(-10,52){3}
\put(46,54){7}
\put(22,22){5}
\put(82,22){4}
\put(22,73){8}
\put(82,73){6}
\end{picture}
\begin{picture}(100,100)(-10,-10)
\put(0,0){\circle*{5}}
\put(50,0){\circle*{5}}
\put(0,50){\circle*{5}}
\put(50,50){\circle*{5}}
\put(30,20){\circle*{5}}
\put(80,20){\circle*{5}}
\put(80,70){\circle*{5}}
\put(30,70){\circle*{5}}
\put(0,0){\line(1,0){50}}
\put(0,0){\line(0,1){50}}
\put(50,50){\line(-1,0){50}}
\put(50,50){\line(0,-1){50}}
\put(30,20){\line(1,0){17}}
\put(53,20){\line(1,0){27}}
\put(30,20){\line(0,1){27}}
\put(30,53){\line(0,1){17}}
\put(80,70){\line(-1,0){50}}
\put(80,70){\line(0,-1){50}}
\put(0,0){\line(3,2){30}}
\put(50,0){\line(3,2){30}}
\put(0,50){\line(3,2){30}}
\put(50,50){\line(3,2){30}}
\put(-10,-3){1}
\put(55,-8){2}
\put(-10,52){3}
\put(46,54){7}
\put(22,22){5}
\put(82,22){6}
\put(22,73){4}
\put(82,73){8}
\end{picture}
\begin{picture}(100,100)(-10,-10)
\put(0,0){\circle*{5}}
\put(50,0){\circle*{5}}
\put(0,50){\circle*{5}}
\put(50,50){\circle*{5}}
\put(30,20){\circle*{5}}
\put(80,20){\circle*{5}}
\put(80,70){\circle*{5}}
\put(30,70){\circle*{5}}
\put(0,0){\line(1,0){50}}
\put(0,0){\line(0,1){50}}
\put(50,50){\line(-1,0){50}}
\put(50,50){\line(0,-1){50}}
\put(30,20){\line(1,0){17}}
\put(53,20){\line(1,0){27}}
\put(30,20){\line(0,1){27}}
\put(30,53){\line(0,1){17}}
\put(80,70){\line(-1,0){50}}
\put(80,70){\line(0,-1){50}}
\put(0,0){\line(3,2){30}}
\put(50,0){\line(3,2){30}}
\put(0,50){\line(3,2){30}}
\put(50,50){\line(3,2){30}}
\put(-10,-3){1}
\put(55,-8){2}
\put(-10,52){3}
\put(46,54){7}
\put(22,22){5}
\put(82,22){6}
\put(22,73){8}
\put(82,73){4}
\end{picture}
\begin{picture}(100,100)(-10,-10)
\put(0,0){\circle*{5}}
\put(50,0){\circle*{5}}
\put(0,50){\circle*{5}}
\put(50,50){\circle*{5}}
\put(30,20){\circle*{5}}
\put(80,20){\circle*{5}}
\put(80,70){\circle*{5}}
\put(30,70){\circle*{5}}
\put(0,0){\line(1,0){50}}
\put(0,0){\line(0,1){50}}
\put(50,50){\line(-1,0){50}}
\put(50,50){\line(0,-1){50}}
\put(30,20){\line(1,0){17}}
\put(53,20){\line(1,0){27}}
\put(30,20){\line(0,1){27}}
\put(30,53){\line(0,1){17}}
\put(80,70){\line(-1,0){50}}
\put(80,70){\line(0,-1){50}}
\put(0,0){\line(3,2){30}}
\put(50,0){\line(3,2){30}}
\put(0,50){\line(3,2){30}}
\put(50,50){\line(3,2){30}}
\put(-10,-3){1}
\put(55,-8){2}
\put(-10,52){3}
\put(46,54){7}
\put(22,22){5}
\put(82,22){8}
\put(22,73){4}
\put(82,73){6}
\end{picture}
\begin{picture}(100,100)(-10,-10)
\put(0,0){\circle*{5}}
\put(50,0){\circle*{5}}
\put(0,50){\circle*{5}}
\put(50,50){\circle*{5}}
\put(30,20){\circle*{5}}
\put(80,20){\circle*{5}}
\put(80,70){\circle*{5}}
\put(30,70){\circle*{5}}
\put(0,0){\line(1,0){50}}
\put(0,0){\line(0,1){50}}
\put(50,50){\line(-1,0){50}}
\put(50,50){\line(0,-1){50}}
\put(30,20){\line(1,0){17}}
\put(53,20){\line(1,0){27}}
\put(30,20){\line(0,1){27}}
\put(30,53){\line(0,1){17}}
\put(80,70){\line(-1,0){50}}
\put(80,70){\line(0,-1){50}}
\put(0,0){\line(3,2){30}}
\put(50,0){\line(3,2){30}}
\put(0,50){\line(3,2){30}}
\put(50,50){\line(3,2){30}}
\put(-10,-3){1}
\put(55,-8){2}
\put(-10,52){3}
\put(46,54){7}
\put(22,22){5}
\put(82,22){8}
\put(22,73){6}
\put(82,73){4}
\end{picture}
\begin{picture}(100,100)(-10,-10)
\put(0,0){\circle*{5}}
\put(50,0){\circle*{5}}
\put(0,50){\circle*{5}}
\put(50,50){\circle*{5}}
\put(30,20){\circle*{5}}
\put(80,20){\circle*{5}}
\put(80,70){\circle*{5}}
\put(30,70){\circle*{5}}
\put(0,0){\line(1,0){50}}
\put(0,0){\line(0,1){50}}
\put(50,50){\line(-1,0){50}}
\put(50,50){\line(0,-1){50}}
\put(30,20){\line(1,0){17}}
\put(53,20){\line(1,0){27}}
\put(30,20){\line(0,1){27}}
\put(30,53){\line(0,1){17}}
\put(80,70){\line(-1,0){50}}
\put(80,70){\line(0,-1){50}}
\put(0,0){\line(3,2){30}}
\put(50,0){\line(3,2){30}}
\put(0,50){\line(3,2){30}}
\put(50,50){\line(3,2){30}}
\put(-10,-3){1}
\put(55,-8){2}
\put(-10,52){3}
\put(46,54){8}
\put(22,22){5}
\put(82,22){4}
\put(22,73){6}
\put(82,73){7}
\end{picture}
\begin{picture}(100,100)(-10,-10)
\put(0,0){\circle*{5}}
\put(50,0){\circle*{5}}
\put(0,50){\circle*{5}}
\put(50,50){\circle*{5}}
\put(30,20){\circle*{5}}
\put(80,20){\circle*{5}}
\put(80,70){\circle*{5}}
\put(30,70){\circle*{5}}
\put(0,0){\line(1,0){50}}
\put(0,0){\line(0,1){50}}
\put(50,50){\line(-1,0){50}}
\put(50,50){\line(0,-1){50}}
\put(30,20){\line(1,0){17}}
\put(53,20){\line(1,0){27}}
\put(30,20){\line(0,1){27}}
\put(30,53){\line(0,1){17}}
\put(80,70){\line(-1,0){50}}
\put(80,70){\line(0,-1){50}}
\put(0,0){\line(3,2){30}}
\put(50,0){\line(3,2){30}}
\put(0,50){\line(3,2){30}}
\put(50,50){\line(3,2){30}}
\put(-10,-3){1}
\put(55,-8){2}
\put(-10,52){3}
\put(46,54){8}
\put(22,22){5}
\put(82,22){4}
\put(22,73){7}
\put(82,73){6}
\end{picture}
\begin{picture}(100,100)(-10,-10)
\put(0,0){\circle*{5}}
\put(50,0){\circle*{5}}
\put(0,50){\circle*{5}}
\put(50,50){\circle*{5}}
\put(30,20){\circle*{5}}
\put(80,20){\circle*{5}}
\put(80,70){\circle*{5}}
\put(30,70){\circle*{5}}
\put(0,0){\line(1,0){50}}
\put(0,0){\line(0,1){50}}
\put(50,50){\line(-1,0){50}}
\put(50,50){\line(0,-1){50}}
\put(30,20){\line(1,0){17}}
\put(53,20){\line(1,0){27}}
\put(30,20){\line(0,1){27}}
\put(30,53){\line(0,1){17}}
\put(80,70){\line(-1,0){50}}
\put(80,70){\line(0,-1){50}}
\put(0,0){\line(3,2){30}}
\put(50,0){\line(3,2){30}}
\put(0,50){\line(3,2){30}}
\put(50,50){\line(3,2){30}}
\put(-10,-3){1}
\put(55,-8){2}
\put(-10,52){3}
\put(46,54){8}
\put(22,22){5}
\put(82,22){6}
\put(22,73){4}
\put(82,73){7}
\end{picture}
\begin{picture}(100,100)(-10,-10)
\put(0,0){\circle*{5}}
\put(50,0){\circle*{5}}
\put(0,50){\circle*{5}}
\put(50,50){\circle*{5}}
\put(30,20){\circle*{5}}
\put(80,20){\circle*{5}}
\put(80,70){\circle*{5}}
\put(30,70){\circle*{5}}
\put(0,0){\line(1,0){50}}
\put(0,0){\line(0,1){50}}
\put(50,50){\line(-1,0){50}}
\put(50,50){\line(0,-1){50}}
\put(30,20){\line(1,0){17}}
\put(53,20){\line(1,0){27}}
\put(30,20){\line(0,1){27}}
\put(30,53){\line(0,1){17}}
\put(80,70){\line(-1,0){50}}
\put(80,70){\line(0,-1){50}}
\put(0,0){\line(3,2){30}}
\put(50,0){\line(3,2){30}}
\put(0,50){\line(3,2){30}}
\put(50,50){\line(3,2){30}}
\put(-10,-3){1}
\put(55,-8){2}
\put(-10,52){3}
\put(46,54){8}
\put(22,22){5}
\put(82,22){6}
\put(22,73){7}
\put(82,73){4}
\end{picture}
\begin{picture}(100,100)(-10,-10)
\put(0,0){\circle*{5}}
\put(50,0){\circle*{5}}
\put(0,50){\circle*{5}}
\put(50,50){\circle*{5}}
\put(30,20){\circle*{5}}
\put(80,20){\circle*{5}}
\put(80,70){\circle*{5}}
\put(30,70){\circle*{5}}
\put(0,0){\line(1,0){50}}
\put(0,0){\line(0,1){50}}
\put(50,50){\line(-1,0){50}}
\put(50,50){\line(0,-1){50}}
\put(30,20){\line(1,0){17}}
\put(53,20){\line(1,0){27}}
\put(30,20){\line(0,1){27}}
\put(30,53){\line(0,1){17}}
\put(80,70){\line(-1,0){50}}
\put(80,70){\line(0,-1){50}}
\put(0,0){\line(3,2){30}}
\put(50,0){\line(3,2){30}}
\put(0,50){\line(3,2){30}}
\put(50,50){\line(3,2){30}}
\put(-10,-3){1}
\put(55,-8){2}
\put(-10,52){3}
\put(46,54){8}
\put(22,22){5}
\put(82,22){7}
\put(22,73){4}
\put(82,73){6}
\end{picture}
\begin{picture}(100,100)(-10,-10)
\put(0,0){\circle*{5}}
\put(50,0){\circle*{5}}
\put(0,50){\circle*{5}}
\put(50,50){\circle*{5}}
\put(30,20){\circle*{5}}
\put(80,20){\circle*{5}}
\put(80,70){\circle*{5}}
\put(30,70){\circle*{5}}
\put(0,0){\line(1,0){50}}
\put(0,0){\line(0,1){50}}
\put(50,50){\line(-1,0){50}}
\put(50,50){\line(0,-1){50}}
\put(30,20){\line(1,0){17}}
\put(53,20){\line(1,0){27}}
\put(30,20){\line(0,1){27}}
\put(30,53){\line(0,1){17}}
\put(80,70){\line(-1,0){50}}
\put(80,70){\line(0,-1){50}}
\put(0,0){\line(3,2){30}}
\put(50,0){\line(3,2){30}}
\put(0,50){\line(3,2){30}}
\put(50,50){\line(3,2){30}}
\put(-10,-3){1}
\put(55,-8){2}
\put(-10,52){3}
\put(46,54){8}
\put(22,22){5}
\put(82,22){7}
\put(22,73){6}
\put(82,73){4}
\end{picture}
\begin{picture}(100,100)(-10,-10)
\put(0,0){\circle*{5}}
\put(50,0){\circle*{5}}
\put(0,50){\circle*{5}}
\put(50,50){\circle*{5}}
\put(30,20){\circle*{5}}
\put(80,20){\circle*{5}}
\put(80,70){\circle*{5}}
\put(30,70){\circle*{5}}
\put(0,0){\line(1,0){50}}
\put(0,0){\line(0,1){50}}
\put(50,50){\line(-1,0){50}}
\put(50,50){\line(0,-1){50}}
\put(30,20){\line(1,0){17}}
\put(53,20){\line(1,0){27}}
\put(30,20){\line(0,1){27}}
\put(30,53){\line(0,1){17}}
\put(80,70){\line(-1,0){50}}
\put(80,70){\line(0,-1){50}}
\put(0,0){\line(3,2){30}}
\put(50,0){\line(3,2){30}}
\put(0,50){\line(3,2){30}}
\put(50,50){\line(3,2){30}}
\put(-10,-3){1}
\put(55,-8){2}
\put(-10,52){3}
\put(46,54){5}
\put(22,22){4}
\put(82,22){6}
\put(22,73){7}
\put(82,73){8}
\end{picture}
\begin{picture}(100,100)(-10,-10)
\put(0,0){\circle*{5}}
\put(50,0){\circle*{5}}
\put(0,50){\circle*{5}}
\put(50,50){\circle*{5}}
\put(30,20){\circle*{5}}
\put(80,20){\circle*{5}}
\put(80,70){\circle*{5}}
\put(30,70){\circle*{5}}
\put(0,0){\line(1,0){50}}
\put(0,0){\line(0,1){50}}
\put(50,50){\line(-1,0){50}}
\put(50,50){\line(0,-1){50}}
\put(30,20){\line(1,0){17}}
\put(53,20){\line(1,0){27}}
\put(30,20){\line(0,1){27}}
\put(30,53){\line(0,1){17}}
\put(80,70){\line(-1,0){50}}
\put(80,70){\line(0,-1){50}}
\put(0,0){\line(3,2){30}}
\put(50,0){\line(3,2){30}}
\put(0,50){\line(3,2){30}}
\put(50,50){\line(3,2){30}}
\put(-10,-3){1}
\put(55,-8){2}
\put(-10,52){3}
\put(46,54){5}
\put(22,22){4}
\put(82,22){6}
\put(22,73){8}
\put(82,73){7}
\end{picture}
\begin{picture}(100,100)(-10,-10)
\put(0,0){\circle*{5}}
\put(50,0){\circle*{5}}
\put(0,50){\circle*{5}}
\put(50,50){\circle*{5}}
\put(30,20){\circle*{5}}
\put(80,20){\circle*{5}}
\put(80,70){\circle*{5}}
\put(30,70){\circle*{5}}
\put(0,0){\line(1,0){50}}
\put(0,0){\line(0,1){50}}
\put(50,50){\line(-1,0){50}}
\put(50,50){\line(0,-1){50}}
\put(30,20){\line(1,0){17}}
\put(53,20){\line(1,0){27}}
\put(30,20){\line(0,1){27}}
\put(30,53){\line(0,1){17}}
\put(80,70){\line(-1,0){50}}
\put(80,70){\line(0,-1){50}}
\put(0,0){\line(3,2){30}}
\put(50,0){\line(3,2){30}}
\put(0,50){\line(3,2){30}}
\put(50,50){\line(3,2){30}}
\put(-10,-3){1}
\put(55,-8){2}
\put(-10,52){3}
\put(46,54){5}
\put(22,22){4}
\put(82,22){7}
\put(22,73){6}
\put(82,73){8}
\end{picture}
\begin{picture}(100,100)(-10,-10)
\put(0,0){\circle*{5}}
\put(50,0){\circle*{5}}
\put(0,50){\circle*{5}}
\put(50,50){\circle*{5}}
\put(30,20){\circle*{5}}
\put(80,20){\circle*{5}}
\put(80,70){\circle*{5}}
\put(30,70){\circle*{5}}
\put(0,0){\line(1,0){50}}
\put(0,0){\line(0,1){50}}
\put(50,50){\line(-1,0){50}}
\put(50,50){\line(0,-1){50}}
\put(30,20){\line(1,0){17}}
\put(53,20){\line(1,0){27}}
\put(30,20){\line(0,1){27}}
\put(30,53){\line(0,1){17}}
\put(80,70){\line(-1,0){50}}
\put(80,70){\line(0,-1){50}}
\put(0,0){\line(3,2){30}}
\put(50,0){\line(3,2){30}}
\put(0,50){\line(3,2){30}}
\put(50,50){\line(3,2){30}}
\put(-10,-3){1}
\put(55,-8){2}
\put(-10,52){3}
\put(46,54){5}
\put(22,22){4}
\put(82,22){7}
\put(22,73){8}
\put(82,73){6}
\end{picture}
\begin{picture}(100,100)(-10,-10)
\put(0,0){\circle*{5}}
\put(50,0){\circle*{5}}
\put(0,50){\circle*{5}}
\put(50,50){\circle*{5}}
\put(30,20){\circle*{5}}
\put(80,20){\circle*{5}}
\put(80,70){\circle*{5}}
\put(30,70){\circle*{5}}
\put(0,0){\line(1,0){50}}
\put(0,0){\line(0,1){50}}
\put(50,50){\line(-1,0){50}}
\put(50,50){\line(0,-1){50}}
\put(30,20){\line(1,0){17}}
\put(53,20){\line(1,0){27}}
\put(30,20){\line(0,1){27}}
\put(30,53){\line(0,1){17}}
\put(80,70){\line(-1,0){50}}
\put(80,70){\line(0,-1){50}}
\put(0,0){\line(3,2){30}}
\put(50,0){\line(3,2){30}}
\put(0,50){\line(3,2){30}}
\put(50,50){\line(3,2){30}}
\put(-10,-3){1}
\put(55,-8){2}
\put(-10,52){3}
\put(46,54){5}
\put(22,22){4}
\put(82,22){8}
\put(22,73){6}
\put(82,73){7}
\end{picture}
\begin{picture}(100,100)(-10,-10)
\put(0,0){\circle*{5}}
\put(50,0){\circle*{5}}
\put(0,50){\circle*{5}}
\put(50,50){\circle*{5}}
\put(30,20){\circle*{5}}
\put(80,20){\circle*{5}}
\put(80,70){\circle*{5}}
\put(30,70){\circle*{5}}
\put(0,0){\line(1,0){50}}
\put(0,0){\line(0,1){50}}
\put(50,50){\line(-1,0){50}}
\put(50,50){\line(0,-1){50}}
\put(30,20){\line(1,0){17}}
\put(53,20){\line(1,0){27}}
\put(30,20){\line(0,1){27}}
\put(30,53){\line(0,1){17}}
\put(80,70){\line(-1,0){50}}
\put(80,70){\line(0,-1){50}}
\put(0,0){\line(3,2){30}}
\put(50,0){\line(3,2){30}}
\put(0,50){\line(3,2){30}}
\put(50,50){\line(3,2){30}}
\put(-10,-3){1}
\put(55,-8){2}
\put(-10,52){3}
\put(46,54){5}
\put(22,22){4}
\put(82,22){8}
\put(22,73){7}
\put(82,73){6}
\end{picture}
\end{center}
\setlength{\unitlength}{\templen}
}
\caption{The $30$ ways to label the cube, up to affine transformations.  These correspond to the $30$ conjugates of $A$ in $\sym_8$.}
\label{fig:thirty}
\end{figure}
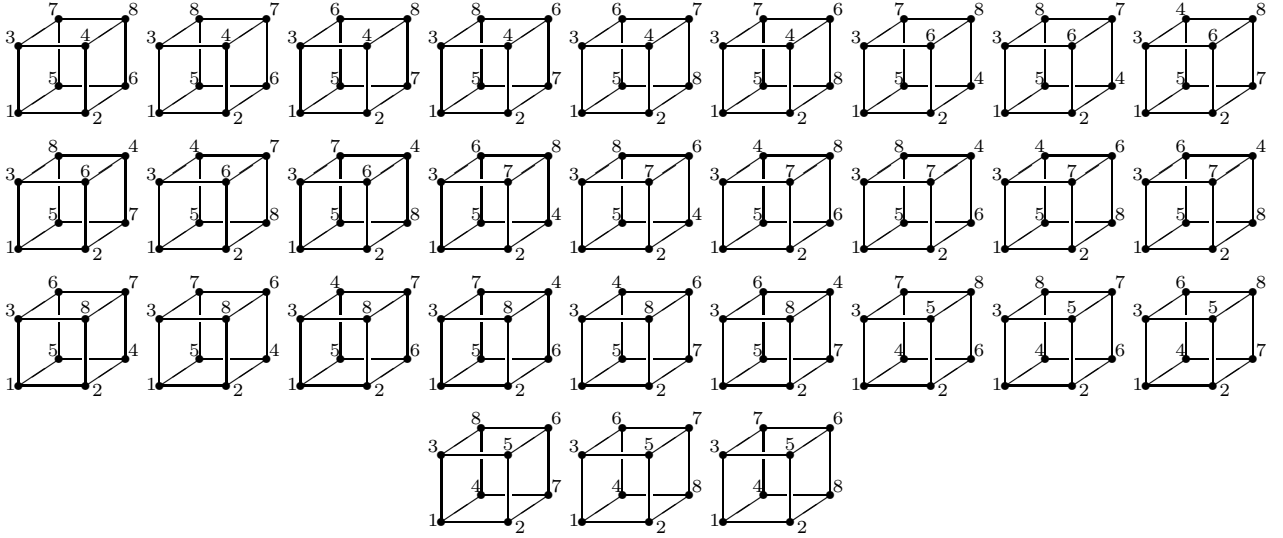

\subsection{Duality}
Again, as in $N=4$, we seek to identify
\[ \{c_1,\ldots,c_8\} \]
that form a set of permutation matrices so that
\[ \{c_1{}^{-1},\ldots,c_8{}^{-1}\}=\{c_1,\ldots,c_8\}. \]
Since the former is described as
\[ \pi B \]
for $B$ a conjugate of $A$, we need
\begin{equation}
(\pi B)^{-1}=\pi B.
\label{inversetwo}
\end{equation}
Now $(\pi B)^{-1}=B\pi^{-1}$, a right coset.  So the condition that $\pi B$ be self-dual implies that the right coset $B\pi^{-1}$ is equal to the left coset $\pi B$.  Since $B$ is not a normal subgroup, $B\pi^{-1}$ is not always a left coset at all, but if it is one, it must be equal to $\pi^{-1}B$, since they both contain $\pi^{-1}$ and since left cosets of $B$ partition $\sym_8$.  Then $\pi^{-1}$, and therefore $\pi$, is in $N(B)$, the normalizer of $B$, and $\pi B$ is an element of order $1$ or $2$ in $N(B)/B$.

Since there is an automorphism of $\sym_8$ that sends $A$ to $B$ (namely, conjugation), we can study the normalizer of $B$ by studying the normalizer of $A$.  For the rest of this discussion, we use $A$ instead of $B$.

The quotient $N(A)/A$ is $\SL_3(\zt)$.  The identity is the only element of order $1$, and there are $21$ elements of order $2$, all of them conjugate to each other in $\SL_3(\zt)$.\cite{ATLAS,PSL3}

For instance, one of them is
\[ \left[\begin{array}{ccc}
0&1&0\\
1&0&0\\
0&0&1
\end{array}\right] \]
As a permutation this turns into
\[ \pi=(23)(67). \]
The corresponding left coset is
\begin{equation}
\begin{aligned}
\pi a_1&=(2\,3)(6\,7),\\
\pi a_2&=(1\,3\,4\,2)(5\,7\,8\,6),\\
\pi a_3&=(1\,2\,4\,3)(5\,6\,8\,7),\\
\pi a_4&=(1\,5)(2\,7)(3\,6)(4\,8),\\
\pi a_5&=(1\,8)(2\,6)(3\,7)(4\,5),\\
\pi a_6&=(1\,6\,4\,7)(2\,8\,3\,5),\\
\pi a_7&=(1\,7\,4\,6)(2\,5\,3\,8),\\
\pi a_8&=(1\,4)(5\,8).
\end{aligned}
 \label{e:pA}
\end{equation}
We see that $\pi a_1$, $\pi a_4$, $\pi a_5$, and $\pi a_8$ all are self-dual, while $\pi a_2$ and $\pi a_3$ are duals of each other and $\pi a_7$ and $\pi a_8$ are duals of each other.

There are $22\cdot 30=660$ such self-dual $\pi B$ classes.  The $22$ cosets of $A$ that are self-dual are shown in Figure~\ref{fig:twentytwo}.  A similar list could be done for each of the other $29$ conjugates of $A$. The remaining unsigned permutations form $75{,}270$ dual pairs.

Again, each element in the 8-tiple of $L_I$-matrices represented by the group $A$ itself~(\ref{e:A}) is self-dual. In turn, the 8-tiple $(23)(67)A$~(\ref{e:pA}) also has order-4 elements, which are not self-dual; however, the dual of every element in $(23)(67)A$ is again an element in the same 8-tuple, so that this 8-tuple is self-dual. In particular, in this example we have that duality swaps $L_2\leftrightarrow L_3$ and $L_6\leftrightarrow L_7$.
 As with the $N=4$ supermultiplets $\VM,\VM_1$ and $\VM_2$, this means that the $N=8$ supersymmetry action in the $(23)(67)A$ 8-tuple differs from that in its dual, so that the $(23)(67)A$ supermultiplet and its dual are ``usefully distinct.''
 An exhaustive characterization of the duality action\footnote{For example, one can further partition the self-dual 8-tuples according to how many, or (in {\em\/ordered\/} 8-tuples) even precisely which $L_I$'s are permuted; this determines which of the corresponding supermultiplets are ``usefully distinct.'' Similar refinement in this combinatorial partitioning can also be done within the dual pairs.} within the complete listing of 8-tuplets of $L_I$-matrices realizing the $\GR(8,8)$ algebra is clearly a finite, but formidable task and beyond the scope of this article.

\begin{figure}[htp]
\begin{tabular}{c}
$\left[\begin{array}{ccc}
1&0&0\\
0&1&0\\
0&0&1
\end{array}\right]$\\
$(\,)$
\end{tabular}
\begin{tabular}{c}
$\left[\begin{array}{ccc}
0&1&0\\
1&0&0\\
0&0&1
\end{array}\right]$\\
$(23)(67)$\\
\end{tabular}
\begin{tabular}{c}
$\left[\begin{array}{ccc}
0&0&1\\
0&1&0\\
1&0&0
\end{array}\right]$\\
$(25)(47)$
\end{tabular}
\begin{tabular}{c}
$\left[\begin{array}{ccc}
1&0&0\\
0&0&1\\
0&1&0
\end{array}\right]$\\
$(35)(46)$
\end{tabular}
\sk

\begin{tabular}{c}
$\left[\begin{array}{ccc}
1&0&0\\
1&0&1\\
1&1&0
\end{array}\right]$\\
$(28)(35)$
\end{tabular}
\begin{tabular}{c}
$\left[\begin{array}{ccc}
1&0&0\\
1&1&0\\
1&0&1
\end{array}\right]$\\
$(28)(46)$
\end{tabular}
\begin{tabular}{c}
$\left[\begin{array}{ccc}
0&1&1\\
0&1&0\\
1&1&0
\end{array}\right]$\\
$(25)(38)$
\end{tabular}
\begin{tabular}{c}
$\left[\begin{array}{ccc}
1&1&0\\
0&1&0\\
0&1&1
\end{array}\right]$\\
$(38)(47)$
\end{tabular}
\begin{tabular}{c}
$\left[\begin{array}{ccc}
1&0&1\\
0&1&1\\
0&0&1
\end{array}\right]$\\
$(58)(67)$
\end{tabular}
\begin{tabular}{c}
$\left[\begin{array}{ccc}
0&1&1\\
1&0&1\\
0&0&1
\end{array}\right]$\\
$(23)(58)$
\end{tabular}
\sk

\begin{tabular}{c}
$\left[\begin{array}{ccc}
1&1&1\\
0&1&0\\
0&0&1
\end{array}\right]$\\
$(34)(56)$
\end{tabular}
\begin{tabular}{c}
$\left[\begin{array}{ccc}
1&1&1\\
0&0&1\\
0&1&0
\end{array}\right]$\\
$(36)(45)$
\end{tabular}
\begin{tabular}{c}
$\left[\begin{array}{ccc}
1&0&0\\
1&1&1\\
0&0&1
\end{array}\right]$\\
$(24)(57)$
\end{tabular}
\begin{tabular}{c}
$\left[\begin{array}{ccc}
0&0&1\\
1&1&1\\
1&0&0
\end{array}\right]$\\
$(27)(45)$
\end{tabular}
\begin{tabular}{c}
$\left[\begin{array}{ccc}
1&0&0\\
0&1&0\\
1&1&1
\end{array}\right]$\\
$(26)(37)$
\end{tabular}
\begin{tabular}{c}
$\left[\begin{array}{ccc}
0&1&0\\
1&0&0\\
1&1&1
\end{array}\right]$\\
$(27)(36)$
\end{tabular}

\sk

\begin{tabular}{c}
$\left[\begin{array}{ccc}
1&1&0\\
0&1&0\\
0&0&1
\end{array}\right]$\\
$(34)(78)$
\end{tabular}
\begin{tabular}{c}
$\left[\begin{array}{ccc}
1&0&1\\
0&1&0\\
0&0&1
\end{array}\right]$\\
$(56)(78)$
\end{tabular}
\begin{tabular}{c}
$\left[\begin{array}{ccc}
1&0&0\\
0&1&1\\
0&0&1
\end{array}\right]$\\
$(57)(68)$
\end{tabular}
\begin{tabular}{c}
$\left[\begin{array}{ccc}
1&0&0\\
1&1&0\\
0&0&1
\end{array}\right]$\\
$(24)(68)$
\end{tabular}
\begin{tabular}{c}
$\left[\begin{array}{ccc}
1&0&0\\
0&1&0\\
1&0&1
\end{array}\right]$\\
$(26)(48)$
\end{tabular}
\begin{tabular}{c}
$\left[\begin{array}{ccc}
1&0&0\\
0&1&0\\
0&1&1
\end{array}\right]$\\
$(37)(48)$
\end{tabular}

\caption{The $22$ elements of $N(A)/A$ of order $1$ or $2$, as elements of $\SL_3(\zt)$ and as permutations.  These correspond to self-dual left-cosets of $A$.  For the other conjugates of $A$, there is a similar list.}
\label{fig:twentytwo}
\end{figure}

\section{Other values of $N$}
For $\GR(4,4)$ and $\GR(8,8)$, we had $d=d_{\min}=N$.  This is also possible for $N=1$ (where $d_{\min}=1$) and $N=2$ (where $d_{\min}=2$), and in these cases, the same analysis works, although the result is rather trivial since there is only one set of permutation matrices.

For other values of $N$, we must have $d>N$, and this approach does not so neatly characterize the sets of permutation matrices $\{|L_1|,\ldots,|L_N|\}$.

For instance, if $N=3$, then $d_{\min}=4$, the set $\{|L_1|,|L_2|,|L_3|\}$ is never forms a group, or a coset of a group.

A more instructive example is $N=16$.  Here, $d_{\min}=128$, and furthermore there are two doubly even codes that give this $d_{\min}$: $e_8\oplus e_8$ and $d_{16}{}^+$.\footnote{See pp.~366--373 of Ref.~\citen{rCHVP} in the classification of self-dual codes, noting that doubly even codes are self-dual codes of Type II.  By Construction A, the lattices that correspond to these are the $E_8\times E_8$ and $SO(32)$ lattices well known to string theorists.}  The fact that there are two codes is not much of an obstacle: we analyze each case separately.  For instance, suppose we start with $e_8\oplus e_8$, which has, as generators, rows of the matrix:
\[ \left[\begin{array}{cccccccccccccccc}
0&1&1&1&1&0&0&0&0&0&0&0&0&0&0&0\\
1&0&1&1&0&1&0&0&0&0&0&0&0&0&0&0\\
1&1&0&1&0&0&1&0&0&0&0&0&0&0&0&0\\
1&1&1&0&0&0&0&1&0&0&0&0&0&0&0&0\\
0&0&0&0&0&0&0&0&0&1&1&1&1&0&0&0\\
0&0&0&0&0&0&0&0&1&0&1&1&0&1&0&0\\
0&0&0&0&0&0&0&0&1&1&0&1&0&0&1&0\\
0&0&0&0&0&0&0&0&1&1&1&0&0&0&0&1
\end{array}\right]
\]

There are indeed Valise Adinkras for both of these codes.  For example, one such choice involves permutations such as these:
\begin{equation}
\begin{array}{rlrl}
s_1&=(\,),\\
s_2&\MC3l{=(1\,2)(3\,4)(5\,6)(7\,8)\cdots(125\,126)(127\,128), }\\
s_3&\MC3l{=(1\,3)(2\,4)(5\,7)(6\,8)\cdots(125\,127)(126\,128),}\\
s_4&\MC3l{=(1\,5)(2\,6)(3\,7)(4\,8)\cdots(123\,127)(124\,128),}\\
s_9&\MC3l{=(1\,9)(2\,10)(3\,11)(4\,12)\cdots(119\,127)(120\,128),}\\
s_{10}&\MC3l{=(1\,17)(2\,18)(3\,19)(4\,20)\cdots(111\,127)(112\,128),}\\
s_{11}&\MC3l{=(1\,33)(2\,34)(3\,35)(4\,36)\cdots(95\,127)(96\,128),}\\
s_{12}&\MC3l{=(1\,65)(2\,66)(3\,67)(4\,68)\cdots(63\,127)(64\,128),}\\
s_5&=s_2s_3s_4,&\qquad\qquad s_{13}&=s_{10}s_{11}s_{12},\\
s_6&=s_3s_4,&\qquad\qquad   s_{14}&=s_9s_{11}s_{12},\\
s_7&=s_2s_4,&\qquad\qquad   s_{15}&=s_9s_{10}s_{12},\\
s_8&=s_2s_3,&\qquad\qquad   s_{16}&=s_9s_{10}s_{11}.\\
\end{array}
\end{equation}

Even though this set contains the identity, $s_1=(\,)$, $\{s_1,\ldots,s_{16}\}$ is not a group: for instance, $s_2s_9$ is not $s_i$ for any $i$.  Thus, the approach in this paper is not sufficient to understand these more general cases.  Recent work by the fourth author of this paper suggests another view of these groups that allow generalization to other cases, and will be addressed in a subsequent paper.



There may be other approaches to doing the counting:  Yan X. Zhang has a promising work in progress that takes a very different approach to this kind of count.\cite{YanTalk}

\section{One, two, four, eight}
The fact that this approach works for $\GR(1,1)$, $\GR(2,2)$, $\GR(4,4)$, and $\GR(8,8)$ is reminiscent of the existence of real division algebras in precisely the dimensions $1$, $2$, $4$, and $8$: the reals, the complex numbers, the quaternions, and the octonions.\cite{hur1,hur2}  Recall that for the division algebras, we lose something each time we go up in dimension: first order, then commutativity, then associativity.  Carrying this procedure further gives algebras that do not have inverses.\cite{numbers,booknumbers}

Likewise, comparing the set $\{|L_I|\}$ for various $\GR(N,N)$: for $\GR(1,1)$ this set is trivial.  For $\GR(2,2)$ it becomes non-trivial but is the entire symmetric group $\sym_2$.  For $\GR(4,4)$ it is no longer all of $\sym_4$, but is a coset of a normal subgroup.  For $\GR(8,8)$, it is a coset of a subgroup of $\sym_8$ but the subgroup is not a normal subgroup.  Carrying this procedure for higher $N$, the set $\{|L_I|\}$ is not a coset of a subgroup at all.

The classification of division algebras has been tied to supersymmetry for some time, with the work of Kugo and Townsend\cite{KugoTownsend} and  work of Baez and Huerta,\cite{BaezHuerta1} to name but a few for starters.  Further work is needed to elucidate the relationship between those ideas and the present work.

\section{The meaning of labeling vertices}
\label{sec:metaphysics}
In Section~\ref{sec:adinkra}, we mentioned the distinction between classifying sets of $L_I$ matrices and classifying Adinkras.  This was important in Section~\ref{sec:n4}, where we classified $\GR(4,4)$ algebras.  The main difference is that with Adinkras, if we rearrange the bosons and fermions, we would consider the result to be the same Adinkra.  But with $L_I$ matrices, since we have to number the bosons and number the fermions, the result may be a different set of $L_I$ matrices.  The intuition might be as follows: each boson and each fermion is numbered.  We might use the numbers $1, \ldots, d$ for the bosons, and re-use them $1, \ldots, d$ for the fermions.  Nothing particularly special is assumed about the relation between a boson and a fermion that is given the same number label.

In addition, we can reverse the sign on a vertex, and the result will reverse the dashing on all edges incident to that vertex.  The result gives an equivalent situation as far as supermultiplets are concerned, but it is still a different set of $L_I$ matrices.

These two issues can be captured by two signed permutation matrices $X$ and $Y$.  These can be used to modify the $L_I$ matrices to find $L_I'=XL_IY$ which also satisfy the above $\GR(d,N)$ algebra, and therefore provides another supermultiplet.  The classification of Adinkras assumed that sets of $L_I$ matrices that are related in this way were deemed to be equivalent.  But in this paper, this equivalence is not taken.

This equivalence is easy to defend, given that the isoscalar supermultiplets obtained from these $L_I$ matrices would only be modified by relabeling which field is which, and replacing fields with their negatives.  This seems like pretty much the same supermultiplet.  Why, then, should we care about the classification without this equivalence?

One answer is that the counting of the $L_I$ matrices is a straightforward mathematical question, independent of the physics, and as such, deserves an answer.

Alternately, we could note that there may be various structures that distinguish various bosons from each other or various fermions from each other.  There could be a symmetry group that relates some bosons together but not others.  This symmetry group may arise from a higher-dimensional supermultiplet that has been dimensionally reduced to one dimension, for instance.  Also, some of these bosons might be treated differently in some symmetry-breaking mechanism such as if these bosons were in an asymmetric external field. 
 The extreme case in these situations is for each boson and each fermion to be treated and labeled differently, which is the situation in the problem we have posed and addressed herein.

\section{Acknowledgments}
This work was partially supported by the National Science Foundation grant PHY-13515155.  S.J.G. acknowledges the generous support of the Roth Professorship and the very congenial and generous hospitality of the Dartmouth College physics department. This research was also supported in part the University of Maryland Center for String \& Particle Theory (CSPT).

\bibliographystyle{habbrv}
\bibliography{refs}

\end{document}